	\newtheorem{lemma}{Lemma}[]
	\newtheorem{proposition}{Proposition}[]
	\newtheorem{theorem}{Theorem}[]
\patchcmd\algocf@Vline{\vrule}{\vrule \kern-0.4pt}{}{}
\patchcmd\algocf@Vsline{\vrule}{\vrule \kern-0.4pt}{}{}
\definecolor{darkgrey}{gray}{0.3}
\definecolor{commentcolor}{gray}{0.5}
\crefname{algocf}{Algorithm}{Algorithms}
\newcommand{\defeq}{\mathrel{:\mkern-0.25mu=}}
\newcommand{\cX}{\mathcal{X}}
\newcommand{\cY}{\mathcal{Y}}
\newcommand{\cZ}{\mathcal{Z}}
\newcommand{\bbR}{\mathbb{R}}
\newcommand{\bbP}{\mathbb{P}}
\newcommand{\bbE}{\mathbb{E}}
\newcommand{\R}{\mathbb{R}}
\newcommand{\dmid}{\parallel}
\renewcommand{\vec}[1]{\bm{#1}}
\newcommand{\mat}[1]{\bm{#1}}
\DeclareMathOperator{\var}{Var}
\DeclareMathOperator*{\argmax}{arg\,max}
\DeclareMathOperator*{\argmin}{arg\,min}
\newcommand{\chancepl}{\textsf{c}}
\newcommand{\infos}[1]{\mathcal{I}_{#1}}
\newcommand{\emptyseq}{\varnothing}
\title{\fontsize{16}{16}\selectfont Stochastic Regret Minimization in Extensive-Form Games}
\author{Gabriele Farina\\
Computer Science Department\\
Carnegie Mellon University\\
\texttt{gfarina@cs.cmu.edu}
\And Christian Kroer\\
IEOR Department\\
Columbia University\\
\texttt{christian.kroer@columbia.edu}
\And Tuomas Sandholm\\
Computer Science Department, CMU\\
Strategic Machine, Inc.\\
Strategy Robot, Inc.\\
Optimized Markets, Inc.\\
\texttt{sandholm@cs.cmu.edu}
}
\g@addto@macro \normalsize {%
 \addtolength\abovedisplayskip{-3pt}%
 \addtolength\belowdisplayskip{-3pt}%
}
\begin{document}
    \twocolumn[
        \maketitle
    ]

    \begin{abstract}
        Monte-Carlo counterfactual regret minimization (MCCFR) is the
        state-of-the-art algorithm for solving sequential games that are
        too large for full tree traversals. It works by using gradient
        estimates that can be computed via sampling. However, stochastic
        methods for sequential games have not been investigated extensively
        beyond MCCFR. In this paper we develop a new framework for
        developing stochastic regret minimization methods. This framework
        allows us to use any regret-minimization algorithm, coupled with
        any gradient estimator. The MCCFR algorithm can be analyzed as a
        special case of our framework, and this analysis leads to
        significantly-stronger theoretical guarantees on convergence, while
        simultaneously yielding a simplified proof. Our framework allows us
        to instantiate several new stochastic methods for solving
        sequential games. We show extensive experiments on three games,
        where some variants of our methods outperform MCCFR.
    \end{abstract}

\section{Introduction}
\label{sec:intro}
Extensive-form games (EFGs) are a broad class of games that can model sequential
and simultaneous moves, outcome uncertainty, and imperfect information. This
includes real-world settings such as negotiation, sequential auctions, security
games~\citep{Lisy16:Counterfactual,Munoz13:Introducing}, cybersecurity
games~\citep{Debruhl14:Power,Chen18:Wireless}, recreational games such as
poker~\citep{Sandholm10:State} and billiards~\citep{Archibald09:Modeling}, and
medical treatment~\citep{Chen12:Tractable}. Typically, EFG
models are operationalized by computing either a \emph{Nash equilibrium} of the game,
or an approximate Nash equilibrium if the game is large. Approximate Nash equilibrium of zero-sum
EFGs has been the underlying idea of several recent AI milestones, where
superhuman AIs for two-player poker were
created~\citep{Moravvcik17:DeepStack,Brown17:Superhuman}. In principle, a
zero-sum EFG can be solved in polynomial time using a linear program whose size
is linear in the size of the game tree~\citep{Stengel96:Efficient}. However, for
most real-world games this linear program is much too large to solve, either
because it does not fit in memory, or because iterations of the simplex
algorithm or interior-point methods become prohibitively expensive due to matrix
inversion. Instead, gradient-based methods are used in
practice~\citep{Zinkevich07:Regret,Hoda10:Smoothing,Tammelin15:Solving,Brown19:Solving,Kroer20:Faster}.
These methods work by only keeping one or two strategies around for each player
(typically the size of a strategy is much smaller than the size of the game
tree). The game tree is then only accessed for computing gradients, which can be done via a single tree traversal that does not require an explicit tree representation, and sometimes game structure can be exploited to speed this up further~\citep{Johanson11:Accelerating}. Finally, these gradients are used to update the strategy iterates.

However, eventually even these gradient-based methods that require traversing the entire game tree become too expensive (henceforth referred to as deterministic methods). This was seen in two recent superhuman poker AIs: Libratus~\citep{Brown17:Superhuman} and Pluribus~\citep{Brown19:Superhuman}. Both AIs were generated in a two-stage  manner: an offline blueprint strategy was computed, and then refinements to the blueprint solution were computed online while playing against the human players. The online solutions were computed using deterministic methods (as the subgames are much smaller). However, the original blueprint strategies had to be computed without traversing the entire game tree, as this game tree is far too large for even a moderate amount of traversals.

When full tree traversals are too expensive, stochastic methods can be used to compute approximate gradients instead. The most popular stochastic method for solving EFGs is the \emph{Monte-Carlo Counterfactual Regret Minimization} (MCCFR) algorithm~\citep{Lanctot09:Monte}. MCCFR combines the CFR algorithm~\citep{Zinkevich07:Regret} with various stochastic gradient estimators. Variations of this algorithm were used in both Libratus and Pluribus for computing the blueprint strategy. Follow-up papers have been written on MCCFR, investigating various methods for improving the sampling schemes used in estimating gradients and so on~\citep{Gibson12:Generalized,Schmid19:Variance}. However, beyond the MCCFR setting, stochastic methods have not been studied extensively for solving EFGs\footnote{\citet{Kroer15:Faster} studies the stochastic mirror prox algorithm for EFGs, but it is not the primary focus of the paper, and seems to be more of a preliminary investigation.}.

In this paper we develop a general framework for developing stochastic regret-minimization methods for solving EFGs. In particular, we introduce a way to combine \emph{any} regret-minimizing algorithm with \emph{any} gradient estimator, and obtain high-probability bounds on the performance of the resulting combined algorithm.
As a first application of our approach, we show that with probability $1-p$, the regret in MCCFR is at most $O(\sqrt{\log(1/p)})$ worse off than that of CFR, an exponential improvement over the bound $O(\sqrt{1/p})$ currently known in the literature. Secondly, our approach enables us to develop a slew of other stochastic methods for solving EFGs. As an example of this, we instantiate \emph{follow-the-regularized-leader} (FTRL) and \emph{online mirror descent} (OMD) using our framework.
%\todo{does this give better bounds than what MCCFR gets??}
We then provide extensive numerical simulations on four diverse games, showing that it is possible to beat MCCFR in several of the games using our new methods.
Because of the flexibility and modularity of our approach, it paves the way for many potential future investigations into stochastic methods for EFGs, either via better gradient estimators, or via better deterministic regret minimization methods that can now be converted into stochastic methods, or both.

%%% Local Variables:
%%% mode: latex
%%% TeX-master: "../stochastic-rm"
%%% End:

    \section{Preliminaries}

\subsection{Two-Player Zero-Sum Extensive-Form Games}\label{sec:efgs}
In this subsection we introduce the notation that we will use in the rest the paper when dealing with two-player zero-sum extensive-form games.

An extensive-form game is played on a tree rooted at a node $r$. Each node $v$ in the tree belongs to a player from the set $\{1, 2, \chancepl\}$, where $\chancepl$ is called the \emph{chance player}. The chance player plays actions from a fixed distribution known to Player $1$ and $2$, and it is used as a device to model stochastic events such as drawing a random card from a deck. We denote the set of actions available at node $v$ with $A_v$. Each action corresponds to an outgoing edges from $v$. Given $a \in A_v$, we let $\rho(v, a)$ denote the node that is reached by following the edge corresponding to action $a$ at node $v$. Nodes $v$ such that $A_v = \emptyset$ are called \emph{leaves} and represent terminal states of the game. We denote with $Z$ the set of leaves of the game. Associated with each leaf $z \in Z$ is a pair $(u_1(z), u_2(z)) \in \bbR^2$ of payoffs for Player 1 and 2, respectively. We denote with $\Delta$ the \emph{payoff range} of the game, that is the value $\Delta \defeq \max_{z\in Z}\max\{u_1(z), u_2(z)\} - \min_{z\in Z}\min\{u_1(z), u_2(z)\}$. In this paper we are concerned with \emph{zero-sum} extensive-form games, that is games in which $u_1(z) = -u_2(z)$ for all $z\in Z$.

To model private information, the set of all nodes for each player $i \in \{1, 2, \chancepl\}$ is partitioned into a collection $\infos{i}$ of non-empty sets, called \emph{information sets}. Each information set $I \in \infos{i}$ contains nodes that Player $i$ cannot distinguish among. In this paper, we will only consider $\emph{perfect-recall}$ games, that is games in which no player forgets what he or she observed or knew earlier. Necessarily, if two nodes $u$ and $v$ belong to the same information set $I$, the set of actions $A_u$ and $A_v$ must be the same (or the player would be able to tell $u$ and $v$ apart). So, we denote with $A_I$ the set of actions of any node in $I$.

\textbf{Sequences.}\quad
The set of \emph{sequences} for Player $i$, denoted $\Sigma_i$, is defined as the set of all possible information set-action pairs, plus a special element called \emph{empty sequence} and denoted $\emptyseq$. Formally, $\Sigma_i \defeq \{(I, a): I \in \infos{i}, a \in A_{I}\} \cup \{\emptyseq\}$. Given a node $v$ for Player $i$, we denote with $\sigma_i(I)$ the last information set-action pair of Player $i$ encountered on the path from the root to node $v$; if the player does not act before $v$, $\sigma_i(I) = \emptyseq$. It is known that in perfect-recall games $\sigma_i(u) = \sigma_i(v)$ for any two nodes $u,v$ in the same information set. For this reason, for each information set $I$ we define $\sigma_i(I)$ to equal $\sigma_i(v)$ for any $v \in I$.

\textbf{Sequence-Form Strategies.}\quad
A {strategy} for Player $i \in \{1, 2, \chancepl\}$ is an assignment of a probability distribution over the set of actions $A_I$ to each information set $I$ that belongs to Player $i$. In this paper, we represent strategies using their \emph{sequence-form representation}~\citep{Romanovskii62:Reduction,Koller96:Efficient,Stengel96:Efficient}.
A \emph{sequence-form strategy} for Player $i$ is a non-negative vector $\vec{z}$ indexed over the set of sequences $\Sigma_i$ of that player. For each $\sigma = (I, a) \in \Sigma_i$, the entry $z[\sigma]$  contains the product of the probability of all the actions that Player $i$ takes on the path from the root of the game tree down to action $a$ at information set $I$, included. In order for these probabilities to be consistent, it is necessary and sufficient that $z[\emptyseq] = 1$ and
\[
    \sum_{a \in A_I} z[(I, a)] = z[\sigma_i(I)]\quad \forall I\in\infos{i}.
\]
A strategy such that exactly one action is selected with probability $1$ at each node is called a \emph{pure} strategy.

We denote with $\cX$ and $\cY$ the set of all sequence-form strategies for Player $1$ and Player $2$, respectively. We denote with $\vec{c}$ the fixed sequence-form strategy of the chance player.

For any leaf $z\in Z$, the probability that the game ends in $z$ is the product of the probabilities of all the actions on the path from the root to $z$. Because of the definition of sequence-form strategies, when Player 1 and 2 play according to strategies $\vec{x} \in \cX$ and $\vec{y}\in\cY$, respectively, this probability is equal to $x[\sigma_1(z)]\cdot y[\sigma_2(z)] \cdot c[\sigma_\chancepl(z)]$. So, Player 2's expected utility is computed via the trilinear map
\begin{equation}\label{eq:expected payoff pl2}
 \bar{u}_2(\vec{x}, \vec{y}, \vec{c}) \defeq \sum_{z\in Z} u_2(z) \cdot x[\sigma_1(z)]\cdot y[\sigma_2(z)] \cdot c[\sigma_\chancepl(z)].
\end{equation}
Since the strategy of the chance player is fixed, the above expression is bilinear in $\vec{x}$ and $\vec{y}$ and therefore can be expressed more concisely as $\bar{u}_2(\vec{x}, \vec{y}) = \vec{x}^{\!\top}\! \mat{A}_2\, \vec{y}$, where $\mat{A}_2$ is called the \emph{sequence-form payoff matrix} of Player 2.

\subsection{Regret Minimization}
In this section we present the regret minimization algorithms that we will work with. We will operate within the framework of \emph{online convex optimization}~\citep{Zinkevich03:Online}. In this setting, a decision maker repeatedly makes decisions $\vec{z}^1,\vec{z}^2,\ldots$ from some convex compact set $\cZ \subseteq \R^n$. After each decision $\vec{z}^t$ at time $t$, the decision maker faces a \emph{linear loss} $\vec{z}^t \mapsto (\vec{\ell}^t)^{\!\top}\!\vec{z}^t$, where $\vec{\ell}^t$ is a \emph{gradient vector} in $\R^n$. %Summarizing, the decision maker makes a decision $\vec{z}^{t+1}$ based on the sequence of losses $\vec{\ell}^1,\ldots,\vec{\ell}^t$ as well as the sequence of past iterates $\vec{z}^1,\ldots,\vec{z}^t$.

Give $\hat{\vec{z}}\in\cZ$, the \emph{regret compared to $\vec{z}$} of the regret minimizer up to time $T$, denoted as $R^T(\hat{\vec{z}})$, measures the difference between the loss cumulated by the sequence
of output decisions $\vec{z}^1, \dots, \vec{z}^T$ and the loss that would have been cumulated by playing a fixed,
time-independent decision $\hat{\vec{z}} \in \cZ$. In symbols,
$
  R^T(\hat{\vec{z}}) \defeq \sum_{t=1}^T (\vec{\ell}^t)^{\!\top}\!(\vec{z}^t - \hat{\vec{z}}).
$
A \emph{Hannan-consistent} regret minimizer is such that the regret compared to \emph{any} $\hat{\vec{z}} \in \cZ$ grows \emph{sublinearly in $T$}.

%We will also be interested in the \emph{predictive} regret-minimization setting~\citep{Chiang12:Online,Rakhlin13:Online,Rakhlin13:Optimization}. In this setting the decision maker is additionally given some prediction $\vec{m}^{t+1}$ of the gradient at time $t+1$, and when the decision maker chooses the next decision $\vec{z}^{t+1}$ they may use this prediction.

The two algorithms beyond MCCFR that we consider assume access to a \emph{distance-generating function} $d: \cZ \rightarrow \R$, which is $1$-strongly convex (wrt. some norm) and continuously differentiable on the interior of $\cZ$. Furthermore $d$ should be such that the gradient of the convex conjugate $\nabla d(\vec{g}) = \argmax_{\vec{z} \in \cZ} \langle \vec{g}, \vec{z} \rangle - d(\vec{z})$ is easy to compute. %Following \citet{Hoda10:Smoothing} we say that a DGF satisfying these properties is a \emph{nice} DGF for $\cZ$.
From $d$ we also construct the \emph{Bregman divergence}
  $
    D(\vec{z}\dmid \vec{z}') \defeq d(\vec{z}) - d(\vec{z}') - \langle \nabla d(\vec{z}'), \vec{z} - \vec{z}' \rangle.
  $

We will use the following two classical regret minimization algorithms.
The \emph{online mirror descent} (OMD) algorithm produces iterates according to the rule
\begin{align}
  \label{eq:omd}
\vec{z}^{t+1} = \argmin_{\vec{z} \in \cZ} \bigg\{ \langle \vec{\ell}^{t}, \vec{z} \rangle + \frac{1}{\eta} D(\vec{z} \dmid \vec{z}^{t})  \bigg\}.
\end{align}

The \emph{follow the regularized leader} (FTRL) algorithm produces iterates according to the rule~\citep{Schwartz07:Primal}
\begin{align}
  \label{eq:ftrl}
\vec{z}^{t+1} = \argmin_{\vec{z} \in \cZ} \bigg\{ \bigg\langle \sum_{\tau=1}^{t} \vec{\ell}^\tau, \vec{z}\bigg\rangle + \frac{1}{\eta} d(\vec{z})  \bigg\}.
\end{align}

OMD and FTRL satisfy regret bounds of the form
\[\max_{\hat{\vec{z}}\in\cZ} R^T(\hat{\vec{z}}) \leq 2 L \sqrt{D(\vec{z}^*\|
  \vec{z}^1)T},\] where $L$ is an upper bound on $\max_{\vec{x} \in \bbR^n} \frac{(\vec{\ell}^t)^{\!\top}\!\vec{x}}{\|
  \vec{x}\|}$ for all $t$. Here $\| \cdot \|$ is the norm which
we measure strong convexity of $d$ with respect to. (see, e.g.,
\citet{Orabona19:Modern}).

\subsection{Equilibrium-Finding in Extensive-Form Games using Regret Minimization}\label{sec:blsp}
It is known that in a two-player extensive-form game, a Nash equilibrium (NE) is the solution to the \emph{bilinear saddle-point problem}
\[
  \min_{\hat{\vec{x}} \in \cX} \max_{\hat{\vec{y}}\in\cY} \hat{\vec{x}}^{\!\top}\!\! \mat{A}_2\, \hat{\vec{y}}.
\]
Given a pair $(\vec{x}, \vec{y}) \in \cX\times\cY$ of sequence-form strategies for the Player $1$ and $2$, respectively, the \emph{saddle-point gap}
\[
    \xi(\vec{x}, \vec{y}) \defeq \max_{\hat{\vec{y}}\in\cY} \{{\vec{x}}^{\!\top}\!\! \mat{A}_2\, \hat{\vec{y}}\} - \min_{\hat{\vec{x}}\in\cX} \{\hat{\vec{x}}^{\!\top}\!\! \mat{A}_2\, {\vec{y}}\}
\]
measures of how far the pair is to being a Nash equilibrium.
In particular, $(\vec{x}, \vec{y})$ is a Nash equilibrium if and only if $\xi(\vec{x},\vec{y}) = 0$.

Regret minimizers can be used to find a sequence of points $(\vec{x}^t, \vec{y}^t)$ whose saddle-point gap converges to 0. The fundamental idea is to instantiate two regret minimizers $\mathcal{R}_1$ and $\mathcal{R}_2$ for the sets $\cX$ and $\cY$, respectively, and let them respond to each other in a self-play fashion using a particular choice of loss vectors (see \cref{fig:selfplay}).

\begin{figure}[ht]
%\vspace{-4mm}
  \centering
  \begin{tikzpicture}[scale=0.8]
    \draw[semithick] (0, 0) rectangle (1.2, .8);
    \node at (.6, .4) {$\mathcal{R}_1$};
    \draw[semithick] (0, -1) rectangle (1.2, -0.2);
    \node at (.6, -.6) {$\mathcal{R}_2$};
    \draw[->] (-.8, .4) -- (0, .4) node[above left] {$\vec{\ell}_1^{t-1}$};
    \draw[->] (-.8, -.6) -- (0, -.6) node[above left] {$\vec{\ell}_2^{t-1}$};
    \draw[->] (1.2, .4) node[above right] {$\vec{x}^t$} -- (2.0, .4);
    \draw[->] (1.2, -.6) node[above right] {$\vec{y}^t$} -- (2.0, -.6);

    \draw[semithick] (2.0, .2) rectangle (2.4, .6);
    %\draw[thick] (2.0, .2) -- (2.4, .6);
    \draw[semithick] (2.0, -.8) rectangle (2.4, -.4);
    %\draw[thick] (2.0, -.8) -- (2.4, -.4);

    \draw[->] (2.4, .4) -- (2.6, .4) -- (3.2, -.6) -- (4, -.6) node[above left] {$\vec{\ell}_2^{t}$};
    \draw[->] (2.4, -.6) -- (2.6, -.6) -- (3.2, .4) -- (4, .4) node[above left] {$\vec{\ell}_1^{t}$};

    \draw[semithick] (4, 0) rectangle (5.2, .8);
    \node at (4.6, .4) {$\mathcal{R}_1$};
    \draw[semithick] (4, -1) rectangle (5.2, -0.2);
    \node at (4.6, -.6) {$\mathcal{R}_2$};
    \draw[->] (5.2, .4) node[above right] {$\vec{x}^{t+1}$} -- (6.0, .4);
    \draw[->] (5.2, -.6) node[above right] {$\vec{y}^{t+1}$} -- (6.0, -.6);

    \node at (6.9, -0.1) {$\cdots$};
    \node at (-1.6, -0.1) {$\cdots$};
  \end{tikzpicture}
  \vspace{-3mm}
  \caption{Self-play method for computing NE in EFGs.}
  \label{fig:selfplay}
\end{figure}
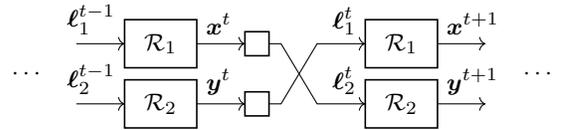

At each time $t$, the strategies $\vec{x}^t$ and $\vec{y}^t$ output by the regret minimizers are used to compute the loss vectors
\begin{equation}\label{eq:selfplay losses}
    \vec{\ell}_1^t \defeq \mat{A}_2\,\vec{y}^t,\quad \vec{\ell}_2^t \defeq -\mat{A}_2^\top \,\vec{x}^t.
\end{equation}
Let $\bar{\vec{x}}$ and $\bar{\vec{y}}$ be the average of the strategies output by $\mathcal{R}_1$ and $\mathcal{R}_2$, respectively, up to time $T$. Furthermore, let $R^T_1 \defeq \max_{\hat{\vec{x}}\in\cX} R_1^T(\hat{\vec{x}})$ and $R^T_2 \defeq \max_{\hat{\vec{y}}\in\cY} R_2^T(\hat{\vec{y}})$ be the maximum regret cumulated by $\mathcal{R}_1$ and $\mathcal{R}_2$ against any sequence-form strategy in $\cX$ and $\cY$, respectively.
A well-known folk lemma asserts that \[\xi(\bar{\vec{x}}, \bar{\vec{y}}) \le (R^T_1+ R^T_2)/T.\] So, if $\mathcal{R}_1$ and $\mathcal{R}_2$ have regret that grows sublinearly, then the strategy profile $(\bar{\vec{x}}, \bar{\vec{y}})$ converges to a saddle point.
%As mentioned in the introduction, this method of computing Nash equilibria is currently the state of the art in large games. 
    \section{Stochastic Regret Minimization for Extensive-Form Games}

In this section we provide some key analytical tools to understand the performance of regret minimization algorithms when gradient estimates are used instead of exact gradient vectors. The results in this sections are complemented by those of~\cref{sec:estimators}, where we introduce computationally cheap gradient estimators for the purposes of equilibrium finding in extensive-form games.

\subsection{Regret Guarantees when Gradient Estimators are Used}\label{sec:martingale}
We start by studying how much the guarantee on the regret degrades when gradient estimators are used instead of exact gradient vectors. Our analysis need not assume that we operate over extensive-form strategy spaces, so we present our results in full generality.

Let $\tilde{\mathcal{R}}$ be a deterministic regret minimizer over a convex and compact set $\cZ$, and consider a second regret minimizer $\mathcal{R}$ over the same set $\cZ$ that is implemented starting from $\tilde{\mathcal{R}}$ as in \cref{fig:R tilde}. In particular, at all times $t$,
\begin{itemize}[leftmargin=7mm,nolistsep,itemsep=1mm]
    \item $\mathcal{R}$ queries the next decision $\vec{z}^t$ of $\tilde{\mathcal{R}}$, and outputs it;
    \item each gradient vector $\vec{\ell}^t$ received by $\mathcal{R}$ is used by $\mathcal{R}$ to compute a \emph{gradient estimate} $\tilde{\vec{\ell}}^t$ such that
        \[
            \bbE_t[\tilde{\vec{\ell}}^t] \defeq \bbE[\tilde{\vec{\ell}}^t \mid \tilde{\vec{\ell}}^1, \dots, \tilde{\vec{\ell}}^{t-1}] = \vec{\ell}^t.
        \]
        (that is, the estimate in unbiased).
        The internal regret minimizer $\tilde{\mathcal{R}}$ is then shown $\tilde{\vec{\ell}}^t$ instead of $\vec{\ell}^t$.
\end{itemize}

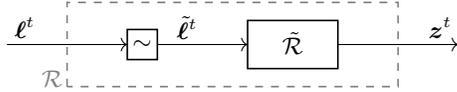
\begin{figure}[H]
    \centering
    \begin{tikzpicture}[scale=.8]
        \draw[semithick,gray,dashed] (-3, -.20) rectangle (2.5, 1.2);
        \draw[semithick] (0, .1) rectangle (1.5, .9);
        \node at (.75, .52) {$\tilde{\mathcal{R}}$};

        \draw[semithick] (-2, .25) rectangle (-1.5, .75);
        \node at (-1.75, .5) {$\sim$};

        \draw[->] (-4.0, .5) -- (-2, .5);
        \draw[->] (-1.5, .5) -- (0, .5);
        \draw[->] ( 1.5, .5) -- (3.5, .5);

        \node at (-3.7, .78) {\small$\vec{\ell}^t$};
        \node at (-1.0, .78) {\small$\tilde{\vec{\ell}}^t$};
        \node at (3.20, .78) {\small$\vec{z}^t$};
        \node[gray] at (-3.25, -0.07) {\small$\mathcal{R}$};
    \end{tikzpicture}\vspace{-4mm}
    \caption{Abstract regret minimizer considered in \cref{sec:martingale}.}
    \label{fig:R tilde}
\end{figure}

When the estimate of the gradient is very accurate (for instance, it have low variance), it is reasonable to expect that the regret $R^T$ incurred by $\mathcal{R}$ up to any time $T$ is roughly equal to the regret $\tilde{R}^T$ that is incurred by $\tilde{\mathcal{R}}$, plus some degradation term that depends on the variance of the estimates. We can quantify this relationship by fixing an arbitrary $\vec{u} \in \cZ$ and introducing the discrete-time stochastic process
\[
    d^t \defeq (\vec{\ell}^t)^{\!\top} (\vec{z}^t - \vec{u}) - (\tilde{\vec{\ell}}^t)^{\!\top} (\vec{z}^t - \vec{u}).
\]
Since by hypothesis $\bbE_t[\tilde{\vec{\ell}}^t] = \vec{\ell}^t$ and $\tilde{\mathcal{R}}$ is a deterministic regret minimizer, $\bbE_t[d^t]=0$ and so $\{d^t\}$ is a martingale difference sequence. This martingale difference sequence is well-known, especially in the context of \emph{bandit} regret minimization~\citep{Abernethy09:Beating,Bartlett08:High}. Using the classical Azuma-Hoeffding concentration inequality~\citep{Hoeffding63:Probability,Azuma67:Weighted}, we can prove the following.

\begin{restatable}{proposition}{propconcentration}\label{prop:concentration}
  Let $M$ and $\tilde{M}$ be positive constants such that $|(\vec{\ell}^t)^{\!\top}(\vec{z}-\vec{z}')| \le M$ and $|(\tilde{\vec{\ell}})^{\!\top}(\vec{z}-\vec{z}')| \le \tilde{M}$ for all times $t = 1, \dots, T$ and all feasible points $\vec{z}, \vec{z}' \in\cZ$. Then, for all $p \in (0, 1)$ and all $\vec{u} \in \cZ$,
\[
  \bbP\mleft[{R}^T(\vec{u}) \le \tilde{R}^T(\vec{u}) + (M + \tilde{M})\sqrt{2T \log\frac{1}{p}}\ \mright] \ge 1-p.
\]
\end{restatable}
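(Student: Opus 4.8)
The plan is to notice that the difference between the two regrets is exactly the partial sum of the martingale difference sequence $\{d^t\}$ that has already been set up, and then to control that partial sum with the Azuma--Hoeffding inequality. Summing the definition of $d^t$ over $t=1,\dots,T$ and using that the same iterate $\vec{z}^t$ is fed both as the output of $\mathcal{R}$ against the true losses $\vec{\ell}^t$ and as the output of $\tilde{\mathcal{R}}$ against the estimated losses $\tilde{\vec{\ell}}^t$,
\[
  \sum_{t=1}^T d^t
  = \sum_{t=1}^T (\vec{\ell}^t)^{\!\top}(\vec{z}^t - \vec{u})
  - \sum_{t=1}^T (\tilde{\vec{\ell}}^t)^{\!\top}(\vec{z}^t - \vec{u})
  = R^T(\vec{u}) - \tilde{R}^T(\vec{u}).
\]
So it suffices to upper bound $\sum_{t=1}^T d^t$ with probability at least $1-p$.

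Next I would verify the two hypotheses that Azuma--Hoeffding needs, with respect to the filtration $\mathcal{F}_{t-1} \defeq \sigma(\tilde{\vec{\ell}}^1,\dots,\tilde{\vec{\ell}}^{t-1})$ — the same $\sigma$-algebra that appears in the definition of $\bbE_t$. First, the martingale-difference property: $\vec{z}^t$ is $\mathcal{F}_{t-1}$-measurable because $\tilde{\mathcal{R}}$ is deterministic and has seen only $\tilde{\vec{\ell}}^1,\dots,\tilde{\vec{\ell}}^{t-1}$ by the time it emits $\vec{z}^t$, and $\vec{\ell}^t$ is $\mathcal{F}_{t-1}$-measurable since the unbiasedness hypothesis $\bbE_t[\tilde{\vec{\ell}}^t] = \vec{\ell}^t$ presupposes exactly that; hence $\bbE_t[d^t] = \big(\vec{\ell}^t - \bbE_t[\tilde{\vec{\ell}}^t]\big)^{\!\top}(\vec{z}^t - \vec{u}) = 0$. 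Second, the almost-sure bound: since $\vec{z}^t \in \cZ$ and $\vec{u} \in \cZ$, the triangle inequality together with the defining properties of $M$ and $\tilde{M}$ gives
\[
  |d^t| \;\le\; \big|(\vec{\ell}^t)^{\!\top}(\vec{z}^t - \vec{u})\big| + \big|(\tilde{\vec{\ell}}^t)^{\!\top}(\vec{z}^t - \vec{u})\big| \;\le\; M + \tilde{M}.
\]

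Finally I would invoke Azuma--Hoeffding: for a martingale difference sequence $\{d^t\}$ with $|d^t| \le M + \tilde{M}$ a.s., one has $\bbP\big[\sum_{t=1}^T d^t \ge \lambda\big] \le \exp\!\big(-\lambda^2 / (2T(M+\tilde{M})^2)\big)$ for every $\lambda > 0$. Setting $\lambda = (M+\tilde{M})\sqrt{2T\log(1/p)}$ makes the right-hand side equal to $p$, so $\bbP\big[R^T(\vec{u}) - \tilde{R}^T(\vec{u}) \ge (M+\tilde{M})\sqrt{2T\log(1/p)}\big] \le p$, and taking complements yields the claim. There is no serious obstacle here; the only point that deserves care is the measurability bookkeeping in the martingale-difference step — making sure the conditioning is on the $\sigma$-algebra generated by the \emph{estimated} gradients $\tilde{\vec{\ell}}^1,\dots,\tilde{\vec{\ell}}^{t-1}$, so that both $\vec{z}^t$ (by determinism of $\tilde{\mathcal{R}}$) and $\vec{\ell}^t$ (by the stated form of the unbiasedness assumption) can legitimately be pulled out of $\bbE_t[\cdot]$.
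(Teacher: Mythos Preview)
Your proposal is correct and follows essentially the same approach as the paper: identify $\sum_{t=1}^T d^t = R^T(\vec{u}) - \tilde R^T(\vec{u})$, bound $|d^t|\le M+\tilde M$ via the triangle inequality, and apply Azuma--Hoeffding with $\lambda=(M+\tilde M)\sqrt{2T\log(1/p)}$. The only difference is that you spell out the filtration and measurability bookkeeping more carefully than the paper does, which is fine.
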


Now if $\tilde{\mathcal{R}}$ has regret that grows sublinearly in $T$, a straightforward consequence of \cref{prop:concentration} is that $\mathcal{R}$ is also growing sublinearly in $T$ with high probability.

\subsection{Connection to Equilibrium Finding}\label{sec:martingale nash}

We now apply the general theory of \cref{sec:martingale} for the specific application of this paper---that is, Nash equilibrium computation in large extensive-form games.

We start from the construction of \cref{sec:blsp}. In particular, we instantiate two deterministic regret minimizers $\tilde{\mathcal{R}}_1$, $\tilde{\mathcal{R}}_2$ and let them play strategies against each other. However, instead of computing the exact losses $\vec{\ell}_1^t$ and $\vec{\ell}_2^t$ as in \eqref{eq:selfplay losses}, we compute their estimates $\tilde{\vec{\ell}}_1^t$ and $\tilde{\vec{\ell}}_2^t$ according to some algorithm that guarantees that $\bbE_t[\tilde{\vec{\ell}}_1^t] = {\vec{\ell}}_1^t$ and $\bbE_t[\tilde{\vec{\ell}}_2^t] = {\vec{\ell}}_2^t$ at all times $t$. We will show that despite this modification, the average strategy profile has a saddle point gap that is guaranteed to converge to zero with high probability.

Because of the particular definition of $\vec{\ell}_1^t$, we have that at all times $t$,
\begin{align*}
    \max_{\vec{x},\vec{x}'\in\cX}\Big|(\vec{\ell}_1^t)^{\!\top} (\vec{x} - \vec{x}')\Big| &=\! \max_{\vec{x},\vec{x}'\in\cX}\Big|(\vec{x}^t)^{\!\top}\! \mat{A}_2 \vec{y}^t - (\vec{x}')^{\!\top}\! \mat{A}_2 \vec{y}^t\Big| \\
   &= \Delta,
\end{align*}
where $\Delta$ is the payoff range of the game (see \cref{sec:efgs}).
(A symmetric statement holds for Player 2.)
For $i\in\{1,2\}$, let $\tilde{M}_i$ be positive constants such that $|(\tilde{\vec{\ell}}_i^t)^\top (\vec{z}-\vec{z}')| \le \tilde{M}_i$ at all times $t=1,\dots, T$ and all strategies $\vec{z},\vec{z}'$ in the sequence-form polytope for Player $i$ (that is, $\cX$ when $i=1$ and $\cY$ when $i=2$).
Using~\cref{prop:concentration}, we find that for all $\hat{\vec{x}} \in \cX$ and $\hat{\vec{y}} \in \cY$, with probability (at least) $1-p$,
\begin{align*}
    \sum_{t=1}^T (\vec{x}^t - \hat{\vec{x}})^\top \mat{A}_2\, \vec{y}^t &\le \tilde{R}_1^T(\hat{\vec{x}}) + (\Delta+\tilde{M}_1)\sqrt{2T \log\frac{1}{p}}\\
    -\sum_{t=1}^T (\vec{x}^t)^{\!\top}\!\! \mat{A}_2\, (\vec{y}^t - \hat{\vec{y}}) &\le \tilde{R}_2^T(\hat{\vec{y}}) + (\Delta+\tilde{M}_2)\sqrt{2T \log\frac{1}{p}}
\end{align*}
where $\tilde{R}_i$ denotes the regret of the regret minimizer $\tilde{\mathcal{R}}_i$ that at each time $t$ observes $\tilde{\vec{\ell}}^t_i$.

Summing the above inequality, dividing by $T$, and using the union bound we obtain that, with probability at least $1-2p$,
\begin{equation}\label{eq:forall x y}
    \begin{array}{l}
        \displaystyle\bar{\vec{x}}^\top \mat{A}_2\, \hat{\vec{y}} - \hat{\vec{x}}^\top \mat{A}_2\, \bar{\vec{y}} \le (\tilde{R}_1^T(\hat{\vec{x}}) + \tilde{R}_2^T(\hat{\vec{y}}))/T \\
        \displaystyle\hspace{2.8cm}+\ (2\Delta + \tilde{M}_1 + \tilde{M}_2)\sqrt{\frac{2}{T}\log\frac{1}{p}},
    \end{array}
\end{equation}
where $\bar{\vec{x}} \defeq \frac{1}{T}\sum_{t=1}^T \vec{x}^t$ and $\bar{\vec{y}} \defeq \frac{1}{T}\sum_{t=1}^T \vec{y}^t$. Since \eqref{eq:forall x y} holds for all $\hat{\vec{x}}\in \cX$ and $\hat{\vec{y}} \in \cY$, we obtain the following.

\begin{proposition}\label{prop:stochastic gap}
    With probability at least $1-2p$,
    \[
        \xi(\bar{\vec{x}}, \bar{\vec{y}}) \le \frac{\tilde{R}_1^T(\hat{\vec{x}}) + \tilde{R}_2^T(\hat{\vec{y}})}{T} + (2\Delta + \tilde{M}_1 + \tilde{M}_2)\sqrt{\frac{2}{T}\log\frac{1}{p}}.
    \]
\end{proposition}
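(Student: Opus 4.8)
The plan is to reduce the statement to \cref{prop:concentration}, applied once to each of the two deterministic regret minimizers $\tilde{\mathcal{R}}_1,\tilde{\mathcal{R}}_2$ driven by the estimated losses in the self-play scheme of \cref{sec:blsp}. In fact essentially all of the computation has already been carried out in the lines leading to \eqref{eq:forall x y}; what a clean proof has to do is (i) verify the hypotheses of \cref{prop:concentration} for the specific loss sequences, (ii) repackage the two per-player bounds, and (iii) pass from a fixed comparator pair to the saddle-point gap $\xi$. For step (i), I would use $\vec{\ell}_1^t = \mat{A}_2\vec{y}^t$ and $\vec{\ell}_2^t = -\mat{A}_2^{\!\top}\vec{x}^t$ together with the definition of the payoff range to record that $|(\vec{\ell}_i^t)^{\!\top}(\vec{z}-\vec{z}')|\le \Delta$ for all feasible $\vec{z},\vec{z}'$ and all $t$ (this is exactly the displayed computation just before the proposition), so that \cref{prop:concentration} applies to $\tilde{\mathcal{R}}_i$ with $M=\Delta$ and $\tilde M=\tilde M_i$.

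Next I would instantiate \cref{prop:concentration} for player $1$ with comparator $\hat{\vec{x}}$ and for player $2$ with comparator $\hat{\vec{y}}$, after observing that the true-loss regrets of the stochastic minimizers are $R_1^T(\hat{\vec{x}}) = \sum_{t}(\vec{x}^t-\hat{\vec{x}})^{\!\top}\mat{A}_2\vec{y}^t$ and $R_2^T(\hat{\vec{y}}) = -\sum_{t}(\vec{x}^t)^{\!\top}\mat{A}_2(\vec{y}^t-\hat{\vec{y}})$. This yields the two inequalities displayed in the text, each holding with probability at least $1-p$. Intersecting the two events via a union bound (probability at least $1-2p$), adding the inequalities, and dividing by $T$, the common bilinear term $\frac1T\sum_t (\vec{x}^t)^{\!\top}\mat{A}_2\vec{y}^t$ telescopes away and one recovers exactly \eqref{eq:forall x y}, namely that with probability at least $1-2p$, $\bar{\vec{x}}^{\!\top}\mat{A}_2\hat{\vec{y}} - \hat{\vec{x}}^{\!\top}\mat{A}_2\bar{\vec{y}} \le \frac{\tilde R_1^T(\hat{\vec{x}})+\tilde R_2^T(\hat{\vec{y}})}{T} + (2\Delta+\tilde M_1+\tilde M_2)\sqrt{\tfrac2T\log\tfrac1p}$.

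The final step is to take the supremum over $\hat{\vec{x}}\in\cX$ and $\hat{\vec{y}}\in\cY$ of the left-hand side, which by definition equals $\max_{\hat{\vec{y}}}\bar{\vec{x}}^{\!\top}\mat{A}_2\hat{\vec{y}} - \min_{\hat{\vec{x}}}\hat{\vec{x}}^{\!\top}\mat{A}_2\bar{\vec{y}} = \xi(\bar{\vec{x}},\bar{\vec{y}})$, giving the claimed bound.

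The only genuinely delicate point is this last step: \cref{prop:concentration} supplies, for each \emph{fixed} comparator, a single $(1-p)$-probability event, whereas the comparators attaining the two suprema that define $\xi(\bar{\vec{x}},\bar{\vec{y}})$ are themselves functions of the random trajectory, so one cannot simply substitute them into a fixed-comparator bound. I would handle this by noting that $R_i^T(\hat{\vec{z}})-\tilde R_i^T(\hat{\vec{z}}) = \sum_t(\vec{\ell}_i^t-\tilde{\vec{\ell}}_i^t)^{\!\top}(\vec{z}^t-\hat{\vec{z}})$ is affine in $\hat{\vec{z}}$, hence its maximum over the sequence-form polytope is attained at one of finitely many vertices; a union bound over those vertices makes \eqref{eq:forall x y} hold simultaneously for every comparator on a single high-probability event, at the cost of replacing $\log\frac1p$ by $\log\frac{N_i}{p}$, where the extra $\log N_i$ contribution is of lower order relative to the deterministic regret $\tilde R_i^T$ that extensive-form minimizers already incur. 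Alternatively, one can read the proposition with $\hat{\vec{x}},\hat{\vec{y}}$ as arbitrary fixed points (so that the displayed inequality is literally the statement proved above), and only then specialize to best responses. Either way, everything else is a direct repackaging of \eqref{eq:forall x y}, so I do not expect any further obstruction.
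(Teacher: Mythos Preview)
Your approach is exactly the paper's: the derivation of \cref{prop:stochastic gap} in the body text is nothing more than the computation you describe in steps (i)--(iii), culminating in~\eqref{eq:forall x y}, after which the paper simply writes ``Since~\eqref{eq:forall x y} holds for all $\hat{\vec{x}}\in\cX$ and $\hat{\vec{y}}\in\cY$, we obtain the following'' and states the proposition.

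The ``delicate point'' you raise is real, and the paper does not address it: it passes silently from a per-comparator high-probability bound to a statement about $\xi(\bar{\vec{x}},\bar{\vec{y}})$ without the uniform-in-comparator argument you supply. Note also that the proposition as stated retains $\hat{\vec{x}},\hat{\vec{y}}$ on the right-hand side while putting $\xi$ on the left, so the intended reading is itself ambiguous---either $\hat{\vec{x}},\hat{\vec{y}}$ are meant to be the (random) maximizers of $\tilde R_i^T$, in which case your vertex-union-bound fix is the honest way to make the step rigorous, or the authors simply wrote $\xi$ where they meant the left-hand side of~\eqref{eq:forall x y}. Either way, you are being more careful than the paper here, and nothing you wrote is wrong.
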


If $\tilde{\mathcal{R}}_1$ and $\tilde{\mathcal{R}}_2$ have regret that is sublinear in $T$, then we conclude that the saddle point gap $\xi(\bar{\vec{x}}, \bar{\vec{y}})$ converges to $0$ with high probability like in the non-stochastic setting. So, $(\bar{\vec{x}},\bar{\vec{y}})$ converges to a saddle point over time.  
    \section{Game-Theoretic Gradient Estimators}\label{sec:estimators}

We complete the theory of \cref{sec:martingale,sec:martingale nash} by showing some examples of computationally cheap gradient estimators designed for game-theoretic applications. We will illustrate how each technique can be used to construct an estimate $\tilde{\vec{\ell}}_1^t$ for the gradient $\vec{\ell}_1^t = \mat{A}_2\,\vec{y}^t$ for Player~$1$ defined in \eqref{eq:selfplay losses}. The computation of an estimate for $\vec{\ell}_2^t$ is symmetrical.

\subsection{External Sampling}\label{sec:external sampling}
An unbiased estimator of the gradient vector $\vec{\ell}^t_1 = \mat{A}_2\, \vec{y}^t$ can be easily constructed by independently sampling \emph{pure} strategies $\tilde{\vec{y}}^t, \tilde{\vec{c}}^t$ for Player $2$ and the chance player, respectively. Indeed, as long as $\bbE_t[\tilde{\vec{y}}^t] = \vec{y}^t$ and $\bbE_t[\tilde{\vec{c}}^t] = \vec{c}$, from~\eqref{eq:expected payoff pl2} we have that for all $\vec{x} \in \cX$,
$
    \bar{u}_2(\vec{x}, \vec{y}^t, \vec{c}) = \bbE_t[\bar{u}_2(\vec{x}, \tilde{\vec{y}}^t, \tilde{\vec{c}}^t)].
$
Hence, the vector corresponding to the (random) linear function $\vec{x} \mapsto \bar{u}_2(\vec{x}, \tilde{\vec{y}}^t, \tilde{\vec{c}}^t)$ is an unbiased gradient estimator, called the \emph{external sampling} gradient estimator.

Since at all times $t$ $\tilde{\vec{y}}^t$ and $\tilde{\vec{c}}^t$ are sequence-form strategies, $\bar{u}_2(\vec{x}, \tilde{\vec{y}}^t, \tilde{\vec{c}}^t)$ is lower bounded by the minimum payoff of the game and upper bounded by the maximum payoff of the game. Hence, for this estimator $\tilde{M}$ in \cref{prop:concentration} is equal to the payoff range $\Delta$ of the game. Substituting that value into \cref{prop:stochastic gap}, we conclude that when the external sampling gradient estimator is used to estimate the gradient for both players, with probability at least $1-2p$ the saddle point gap of the average strategy profile $(\bar{\vec{x}},\bar{\vec{y}})$ is% upper bounded as
\begin{equation}\label{eq:bound external}
   \xi(\bar{\vec{x}}, \bar{\vec{y}}) \le \frac{\tilde{R}_1^T(\hat{\vec{x}}) + \tilde{R}_2^T(\hat{\vec{y}})}{T} + 4\Delta\sqrt{\frac{2}{T}\log\frac{1}{p}}.
\end{equation}

The external sampling gradient estimator, that is, the vector corresponding to the linear function $\vec{x} \mapsto \bar{u}_2(\vec{x}, \tilde{\vec{y}}^t, \tilde{\vec{c}}^t)$, can be computed via a simple traversal of the game tree. The algorithm starts at the root of the game tree and starts visiting the tree. Every time a node that belongs to the chance player or to Player $2$ is encountered, an action is sampled according to the strategy $\vec{c}$ or $\vec{y}^t$, respectively. Every time a node for Player $1$ is encountered, the algorithm branches on all possible actions and recurses. A simple linear-time implementation is given in \cref{algo:external sampling}. For every node of Player $2$ or chance player, the algorithm branches on only one action, thus computing an external sampling gradient estimate is significantly cheaper to compute than the exact gradient $\vec{\ell}_1^t$.

\begin{algorithm}[ht]\small
  \caption{Efficient implementation of the external\hspace*{-1cm}\newline sampling gradient estimator}
    \label{algo:external sampling}\DontPrintSemicolon
    \KwIn{$\vec{y}^t$ strategy for Player $2$}
    \KwOut{$\tilde{\vec{\ell}}_1^t$ unbiased gradient estimate for $\vec{\ell}_1^t$ defined in~\eqref{eq:selfplay losses}}
    \BlankLine
    $\tilde{\vec{\ell}}_1^t \gets \vec{0} \in \bbR^{|\Sigma_1|}$\;
    \Subr{\normalfont\textsc{TraverseAndSample}($v$)}{
        $I \gets$ infoset to which $v$ belongs\;
        \uIf{\normalfont$v$ is a leaf}{
            $\tilde{\ell}^t_1[\sigma_1(v)] \gets u_1(v)$\;
        }\ElseIf{\normalfont$v$ belongs to the chance player}{
            Sample an action $a^* \sim \mleft(\frac{{c}[(I, a)]}{{c}[\sigma_{\chancepl}(I)]}\mright)_{a \in A_v}$\;
            $\textsc{TraverseAndSample}(\rho(v,a^*))$\;
        }
        \ElseIf{\normalfont$v$ belongs to Player $2$}{
            Sample an action $a^* \sim \mleft(\frac{{y^t}[(I, a)}{{y^t}[\sigma_{2}(I)]}
 \mright)_{a \in A_v}$\;
            $\textsc{TraverseAndSample}(\rho(v,a^*))$\;
        }
        \ElseIf{\normalfont$v$ belongs to Player $1$}{
            \For{$a \in A_v$}{
                $\textsc{TraverseAndSample}(\rho(v,a))$\;
            }
        }
    }
    \Hline{}
    $\textsc{TraverseAndSample}(r)$\Comment*{\color{commentcolor}$r$ is root of the game tree}
    \KwRet{$\tilde{\vec{\ell}}_1^t$}\;
\end{algorithm}

\textbf{Remark.} Analogous estimators where only the chance player's strategy $\vec{c}$ or only Player 2's strategy $\vec{y}^t$ are sampled are referred to as \emph{chance sampling} estimator and \emph{opponent sampling} estimator, respectively. In both cases, the same value of $\tilde{M} = \Delta$ (and therefore the bound in~\eqref{eq:bound external}) applies.

\textbf{Remark.} In the special case in which $\mathcal{R}_1$ and $\mathcal{R}_2$ run the CFR regret minimization algorithm, our analysis immediately implies the correctness of external-sampling MCCFR, chance-sampling MCCFR and opponent-sampling MCCFR, while at the same time yielding a significant improvement over the theoretical convergence to Nash equilibrium of the overall algorithm: the right hand side of~\eqref{eq:bound external} grows as $\sqrt{\log(1/p)}$ in $p$, compared to the $O(\sqrt{1/p})$ of the original analysis of by~\citet{Lanctot09:Monte}---an exponential improvement.

\subsection{Outcome Sampling}\label{sec:outcome sampling}
Let $\vec{w}^t \in \cX$ be an arbitrary strategy for Player $1$. Furthermore, let $\tilde{z}^t \in Z$ be a random variable such that for all $z\in Z$,
\[
    \bbP_t[\tilde{z}^t = z] = w^t[\sigma_1(z)]\cdot y^t[\sigma_2(z)]\cdot c[\sigma_\chancepl(z)],
\] and let $\vec{e}_z$ be defined as the vector such that $e_z[\sigma_1(z)] = 1$ and $e_z[\sigma] = 0$ for all other $\sigma \in \Sigma_1, \sigma \neq \sigma_1(z)$. It is a simple exercise to prove that the random vector
\[
  \tilde{\vec{\ell}}_1^t \defeq \frac{u_2(\tilde{z}^t)}{w^t[\sigma_1(\tilde{z}^t)]}\vec{e}_{\tilde{z}^t}
\]
is such that $\bbE_t[\tilde{\vec{\ell}}_1^t] = \vec{\ell}_1^t$ (see \cref{app:proofs} for a proof). This particular definition of $\tilde{\vec{\ell}}_1^t$ is called the \emph{outcome sampling} gradient estimator.

Computationally, the outcome sampling gradient estimator is cheaper than  the external sampling gradient estimator. Indeed, since $\vec{w}^t \in \cX$, one can sample $\tilde{z}^t$ by following a random path from the root of the game tree by sampling (from the appropriate player's strategy) one action at each node encountered along the way. The walk terminates as soon as it reaches a leaf, which corresponds to $\tilde{z}$.

As we show in \cref{app:proofs}, the value of $\tilde{M}$ for the outcome sampling gradient estimator is given by \[\tilde{M} = \Delta \cdot \max_{\sigma \in \Sigma_1} \frac{1}{w^t[\sigma]}.\] So, the high-probability bound on the saddle point gap is inversely proportional to the minimum entry in $w^t$, as already noted by~\citet{Lanctot09:Monte}.

In \cref{app:proofs} we show that a strategy $\vec{w}^*$ exists, such that $w^*[\sigma] \ge 1/(|\Sigma_1|-1)$ for every $\sigma\in\Sigma_1$.
Since $\vec{w}^*$ guarantees that all of the $|\Sigma_1|$ entries of $w^*$ are at least $1/(|\Sigma_1| - 1)$, we call $\vec{w}^*$ the \emph{balanced strategy}, and the corresponding outcome sampling regret estimator the \emph{balanced outcome sampling} regret estimator. As a consequence of the above analysis, when both players' gradients are estimated using the balanced outcome sampling regret estimator, with probability at least $1-2p$ the saddle point gap of the average strategy profile $(\bar{\vec{x}}, \bar{\vec{y}})$  is upper bounded as
\begin{equation*}%\label{eq:bound outcome bal}
   \xi(\bar{\vec{x}}, \bar{\vec{y}}) \le \frac{\tilde{R}_1^T(\hat{\vec{x}}) + \tilde{R}_2^T(\hat{\vec{y}})}{T} + 2 (|\Sigma_1| + |\Sigma_2|)\Delta\sqrt{\frac{2}{T}\log\frac{1}{p}}.
\end{equation*}
To our knowledge, we are the first to introduce the balanced outcome sampling gradient estimator.

The final remark of \cref{sec:external sampling} applies to outcome sampling as well. 
    
\section{Experiments}

In this section we perform numerical simulations to investigate the practical performance of several stochastic regret-minimization algorithms. First, we have the MCCFR algorithm instantiated with \emph{regret matching}~\citep{Hart00:Simple}.
Secondly, we instantiate two algorithms through our framework: FTRL and OMD, both using the dilated entropy DGF from \citet{Kroer20:Faster}, using their theoretically-correct recursive scheme for information-set weights\footnote{As opposed to constant information-set weights as used numerically by some past papers. Preliminary experiments with constant weights gave worse results.}.
We will show two sections of experiments, one with external sampling, and one with balanced outcome sampling.

\begin{figure*}
    \centering\begin{tabular}{cc}
        \includegraphics[scale=.7]{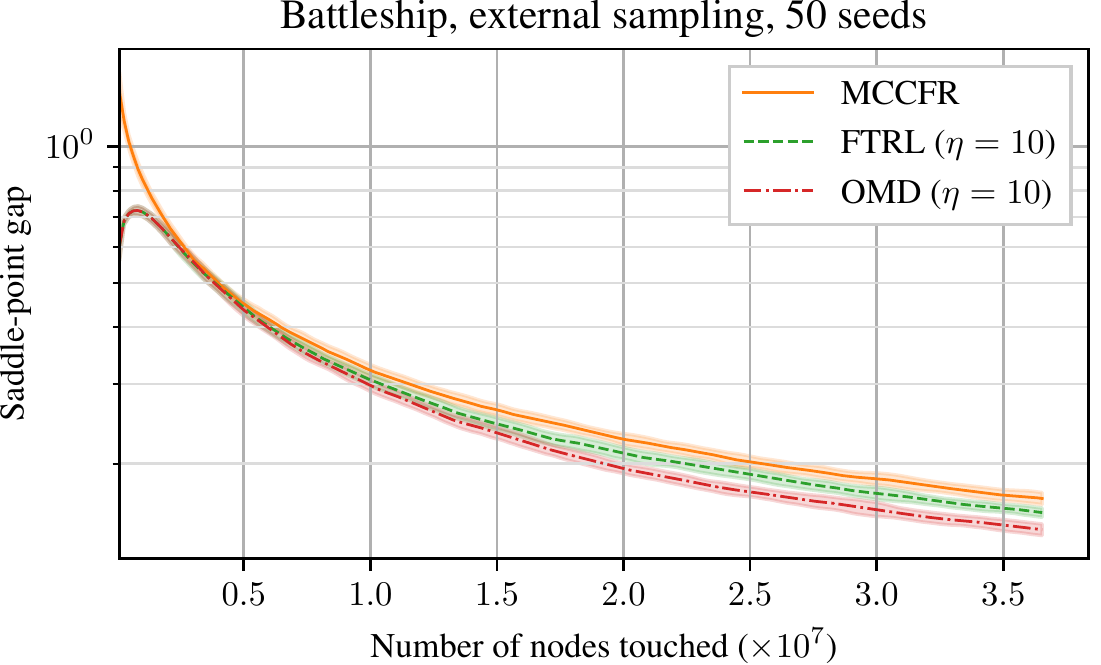} &
        \includegraphics[scale=.7]{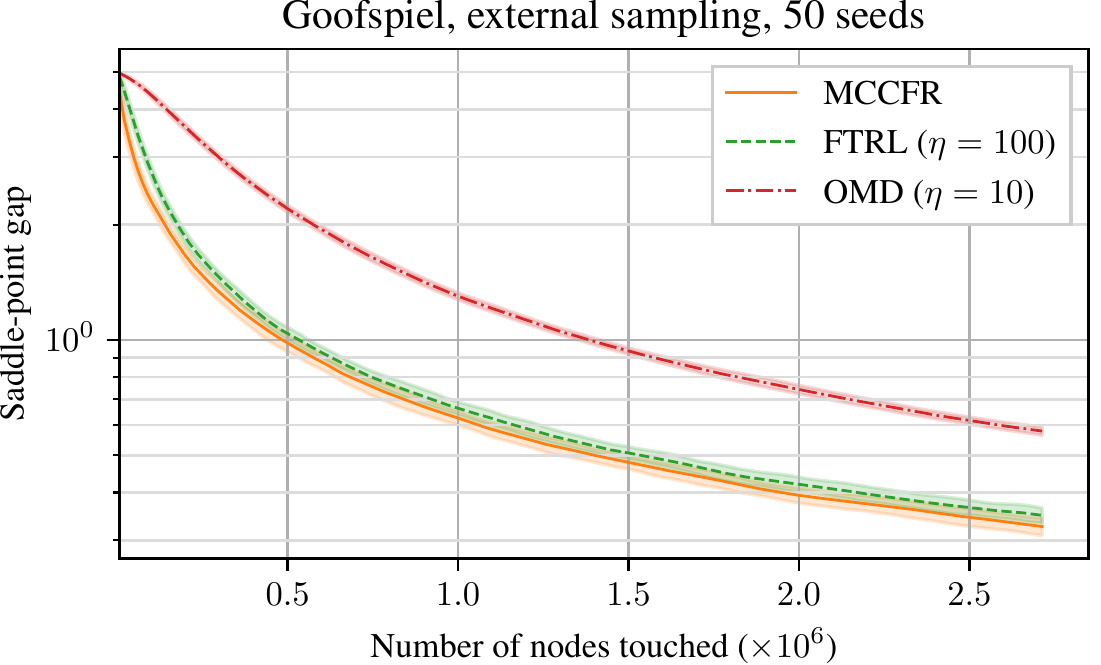}\\
        \includegraphics[scale=.7]{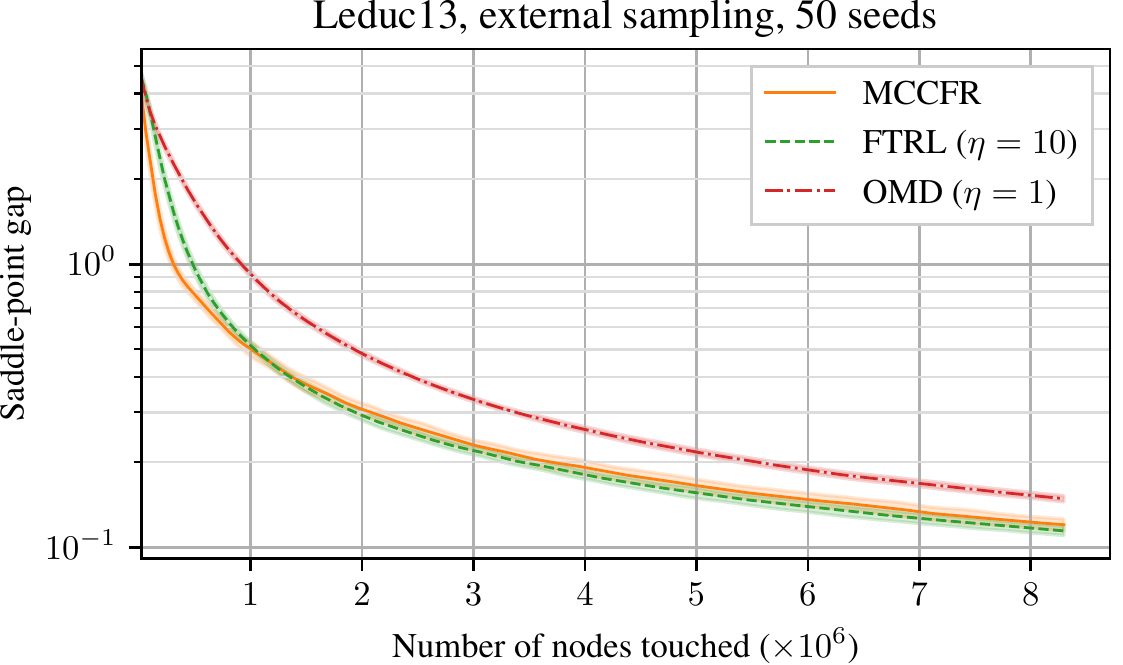} &
        \includegraphics[scale=.7]{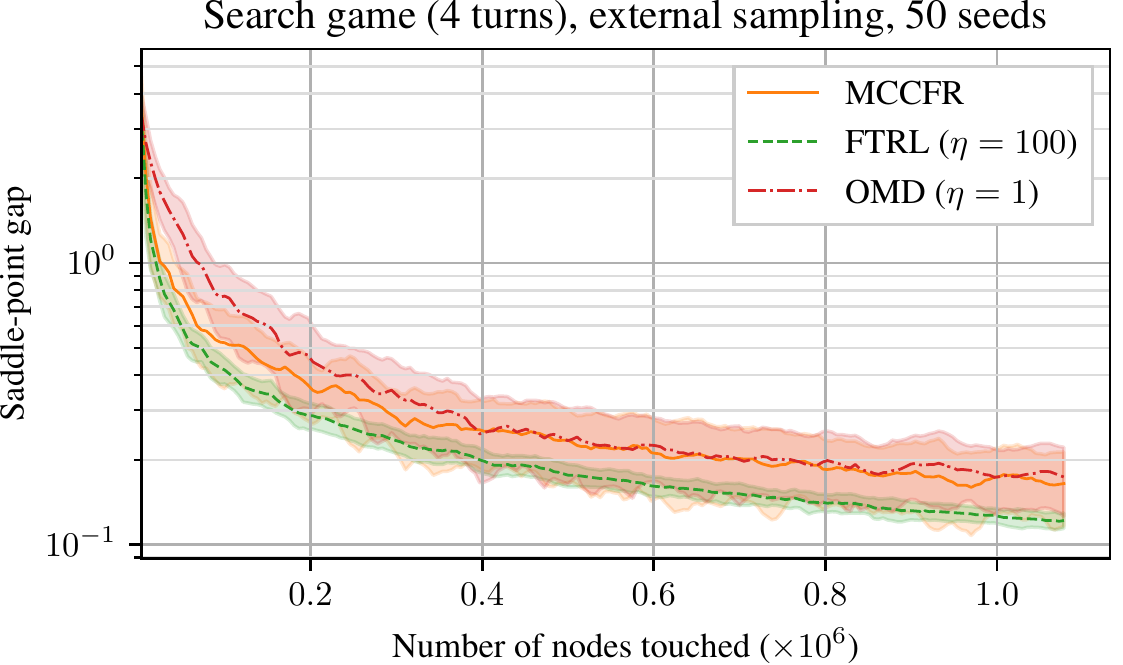}
    \end{tabular}
    \vspace{-2mm}
    \caption{Performance of MCCFR, FTRL, and OMD when using the external sampling gradient estimator.}
    \vspace{-4mm}
    \label{fig:experiments external}
\end{figure*}

For each game, we try four choices of stepsize $\eta$ in FTRL and OMD: $0.1, 1,
10, 100$. For each algorithm and game pair we show only the best-performing of
these four stepsizes in the plots below. The results for all stepsizes can be
found in \cref{app:additional experiments}. The stepsize is very
important: for most games where FTRL or OMD beats MCCFR, only the best stepsize
does so. At the same time, we did not extensively tune stepsizes (four stepsizes
increasing by a factor of 10 per choice leads to very coarse tuning), so there
is room for better tuning of these. Figuring out how to intelligently choose, or
adapt, stepsizes is an important follow-up work to the present paper, and would
likely lead to faster algorithms.

For each pair of game and algorithm we run the experiment 50 times, in order to account for variance in gradient estimates. All plots show the mean performance, and each line is surrounded by shading indicating one standard deviation around the mean performance.

In each plot we show the number of nodes (of the game tree) touched on the x-axis, and on the y-axis we show the saddle-point gap. All algorithms are run until the number of nodes touched corresponds to 50 full tree traversals (or equivalently 25 iterations of deterministic CFR or CFR$^+$).

We run our experiments on four different games. Below, we summarize some key properties of the games. The full description of each game is in \cref{app:games}.

\emph{Leduc poker} is a standard parametric benchmark game in the EFG-solving
community~\cite{Southey05:Bayes}. For the experiments we consider the largest variant of the game, denoted Leduc13. Leduc13 uses a deck of 13 unique
cards, with two copies of each card. The game has 166,336 nodes and 6,007 sequences per player.

\emph{Goofspiel} The variant of Goofspiel~\citep{Ross71:Goofspiel} that we use
in our experiments is a two-player card game, employing three identical decks
of 4 cards each.
This game has 54,421 nodes and 21,329 sequences per player.

\emph{Search} is a security-inspired pursuit-evasion game. The game is played on
a graph shown in Figure~\ref{fig:search_game} in \cref{app:games}. We consider two variants of the game, which differ on the number $k$ of simultaneous moves allowed before the game ties out. Search-4 uses $k = 4$ and has 21,613 nodes, 2,029 defender sequences, and 52
attacker sequences. Search-5 uses $k=5$ and has 87,972 nodes, 11,830 defender sequences, and 69 attacker sequences.
Our search game is a zero-sum variant of the one used by \citet{Kroer18:Robust}. A similar search game  considered by \citet{Bosansky14:Exact} and \citet{Bosansky15:Sequence}.

\emph{Battleship} is a parametric version of a classic board game, where two
competing fleets take turns shooting at each other~\citep{Farina19:Correlation}.
The game has 732,607 nodes, 73,130 sequences for Player 1,
and 253,940 sequences for Player 2.

\subsection{External Sampling}
For every algorithm and game we show the exploitability averaged over 50 runs (due to the variance in which gradient is sampled). Each line is surrounded by shading covering one standard deviation from the mean.

\cref{fig:experiments external} (top left) shows the performance on Battleship with external sampling. We see that both FTRL and OMD performs better than MCCFR when using a stepsize of $\eta=10$.
In Goofspiel (top right plot) we find that OMD performs significantly worse than MCCFR and FTRL. MCCFR performs slightly better than FTRL also.
In Leduc 13 (bottom left) we find that OMD performs significantly worse than MCCFR and FTRL. FTRL performs slightly better than MCCFR.
Finally, in Search-4 (bottom right) we find that OMD and MCCFR perform comparably, while FTRL performs significantly better.
Due to space limitations, we show the experimental evaluation for Search-5 in \cref{app:additional experiments}. In Search-5 all algorithms perform comparably, with FTRL performing slightly better than OMD and MCCFR.

Summarizing across all four games for external sampling, we see that FTRL, either with $\eta=10$ or $\eta=100$, was better than MCCFR on four out of five games (and essentially tied on the last game), with significantly better performance in the Search games. OMD seems to be more sensitive to stepsize, although it performs significantly better on Battleship.

\vspace{-1mm}
\subsection{Outcome Sampling}

Next, we investigate the performance of balanced outcome sampling. For that gradient estimator we performed 100 outcome samples per gradient estimate, and use the empirical mean of those 100 samples as our estimate. The reason for this is that FTRL and OMD seem more sensitive to stepsize issues under outcome sampling, and so we investigate a slightly more stable version of outcome sampling, while still having the same asymptotically cheap cost. It can be shown easily that by averaging gradient estimators, the constant $\tilde{M}$ required in \cref{prop:concentration} does not increase.

Due to computational time issues, we present performance for only 10 random seeds per game in outcome sampling. For this reason we omit performance on Search-4, which seemed too noisy to make conclusions about. Search-4 plots can be found in \cref{app:additional experiments}.

\begin{figure*}
    \centering\begin{tabular}{cc}
        \includegraphics[scale=.7]{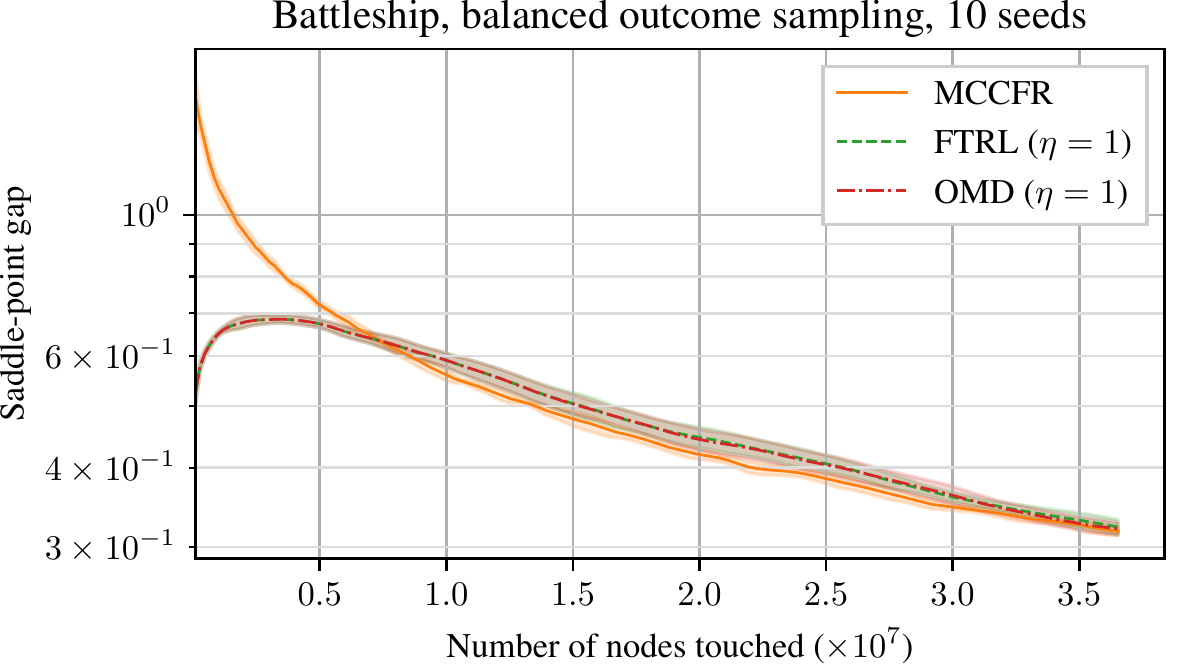} &
        \includegraphics[scale=.7]{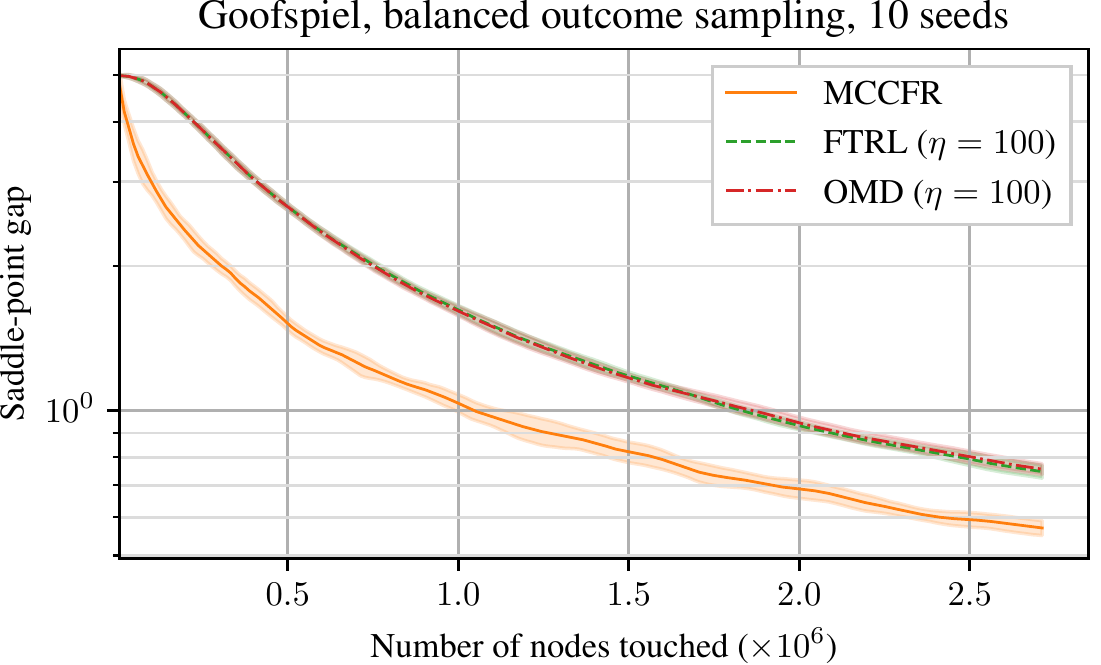}\\
        \includegraphics[scale=.7]{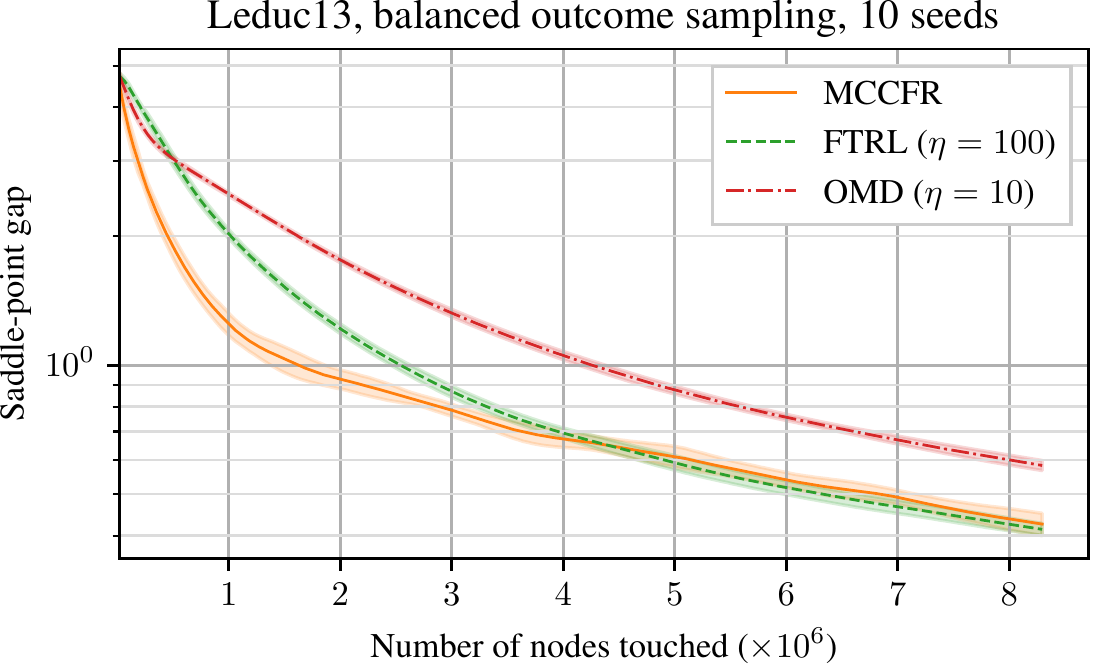} &
        \includegraphics[scale=.7]{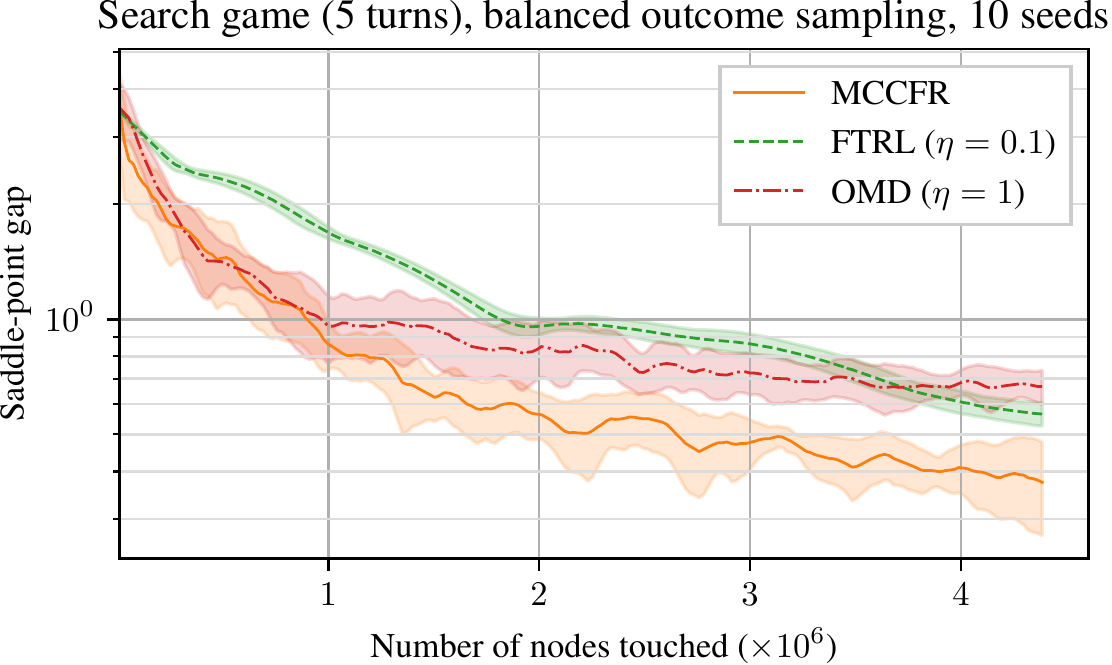}
    \end{tabular}
    \vspace{-2mm}
    \caption{Performance of MCCFR, FTRL, and OMD when using the balanced outcome sampling gradient estimator.}
    \label{fig:experiments outcome}
    \vspace{-1mm}
\end{figure*}

\cref{fig:experiments outcome} (top left) shows the performance on Battleship with outcome sampling. Here all algorithms perform essentially identically, with MCCFR performing better for a while, but then they all become similar around $3\times 10^7$ nodes touched.

In Goofspiel (top right) MCCFR performs significantly better than both FTRL and OMD. Both FTRL and OMD were best with $\eta=100$, our largest stepsize. It thus seems likely that even more aggressive stepsizes are needed in order to get better performance in Goofspiel.

In Leduc13 (bottom left) FTRL with outcome sampling is initially slower than MCCFR, but eventually overtakes it. OMD is significantly worse than the other algorithms.

Finally, in Search-5 (bottom right) MCCFR performs significantly better than FTRL and OMD,
although FTRL seems to be catching up in later iterations.

Overall, when the outcome sampling gradient estimator is used, MCCFR seems to perform better than FTRL and OMD. In two out of four games it is significantly better, in one it's marginally better, and in one FTRL is marginally better. We hypothesize that FTRL and OMD are much more sensitive to stepsize issues with outcome sampling as opposed to external sampling. This would make sense, as the variance becomes much higher.

%%% Local Variables:
%%% mode: latex
%%% TeX-master: "../stochastic-rm"
%%% End:

    \section{Conclusion}
\label{sec:conclusion}

We introduced a new framework for constructing stochastic regret-minimization methods for solving zero-sum games. This framework completely decouples the choice of regret minimizer and gradient estimator, thus allowing any regret minimizer to be coupled with any gradient estimator. Our framework yields a streamlined and dramatically simpler proof of MCCFR. Furthermore, it immediately gives a significantly stronger bound on the convergence rate of the MCCFR algorithm, whereby with probability $1-p$ the regret grows as $O(\sqrt{T\log(1/p)})$ instead of $O(\sqrt{T/p})$ as in the original analysis---an exponentially tighter bound. We also instantiated stochastic variants of the FTRL and OMD algorithms for solving zero-sum EFGs using our framework. Extensive numerical experiments showed that it is often possible to beat MCCFR using these algorithms, even with a mild amount of stepsize tuning. Due to its modular nature, our framework opens the door to many possible future research questions around stochastic methods for solving EFGs. Among the most promising are methods for controlling the stepsize in, for instance, FTRL or OMD, as well as instantiating our framework with other regret minimizers.

One potential avenue for future work is to develop gradient-estimation techniques with stronger control over the variance. In that case, it is possible to derive a variation of Proposition~\ref{prop:concentration} that is based on the sum of conditional variances, an \emph{intrinsic} notion of time in martingales (e.g., \citet{Blackwell73:Amount}). In particular, using the Freedman-style \citep{Freedman75:Tail} concentration result of
\citet{Bartlett08:High} for martingale difference sequences, we obtain:

\begin{restatable}{proposition}{propconcentrationlowvar}\label{prop:concentration low var}
  Let $T \ge 8$, and let $M$ and $\tilde{M}$ be positive constants such that $|(\vec{\ell}^t)^{\!\top}\vec{z}| \le M$ and $|(\tilde{\vec{\ell}})^{\!\top}\vec{z}| \le \tilde{M}$ for all times $t = 1, \dots, T$ and all feasible points $\vec{z}\in\cX$. Furthermore, let
$
    \sigma \defeq \sqrt{\var[d^t \mid \tilde{\vec{\ell}}^1, \dots, \tilde{\vec{\ell}}^{t-1}]}
$
be the square root of the sum of conditional variances.
Then, for all $p \in (0, 1)$ and all $\vec{u} \in \cX$,
\[
  \bbP\mleft[{R}^T(\vec{u}) \le \tilde{R}^T(\vec{u}) + 4\max\{\sigma\beta, (M + \tilde M)\beta^2\}\mright] \ge 1-p,
\]
where \[
    \beta \defeq \sqrt{\log\mleft(\frac{3\log T}{2p}\mright)}.\]
\end{restatable}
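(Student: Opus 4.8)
The plan is to reuse the martingale argument behind \cref{prop:concentration}, replacing the Azuma--Hoeffding inequality by a variance-adaptive (Freedman-type) martingale tail bound. Recall from \cref{sec:martingale} the discrete-time process
\[
    d^t \defeq (\vec{\ell}^t)^{\!\top}(\vec{z}^t - \vec{u}) - (\tilde{\vec{\ell}}^t)^{\!\top}(\vec{z}^t - \vec{u}).
\]
Since $\vec{z}^t$ and $\vec{\ell}^t = \bbE_t[\tilde{\vec{\ell}}^t]$ are both measurable with respect to $\tilde{\vec{\ell}}^1,\dots,\tilde{\vec{\ell}}^{t-1}$, we have $\bbE_t[d^t] = 0$, so $\{d^t\}$ is a martingale difference sequence, and it telescopes: $\sum_{t=1}^T d^t = R^T(\vec{u}) - \tilde R^T(\vec{u})$. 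Hence the claim is exactly an upper-tail bound for $\sum_{t=1}^T d^t$, and a one-sided concentration inequality is all that is needed.

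First I would record the two quantities a Freedman-type bound consumes. Since $\vec{z}^t,\vec{u}\in\cX$, the hypotheses $|(\vec{\ell}^t)^{\!\top}\vec{z}| \le M$ and $|(\tilde{\vec{\ell}}^t)^{\!\top}\vec{z}| \le \tilde M$ give, by the triangle inequality, $|d^t| \le 2(M+\tilde M)$ almost surely. And because $\bbE_t[d^t]=0$, the conditional variance of $d^t$ equals its conditional second moment, so the sum of conditional variances is exactly $\sigma^2$ as defined in the statement (note: this is also the step where a lower-variance estimator would pay off, but the proof itself only uses $\sigma$ abstractly).

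Next I would invoke the Freedman-style concentration inequality for martingale difference sequences of \citet{Bartlett08:High} (a refinement of \citet{Freedman75:Tail}): for a martingale difference sequence with increments bounded in absolute value by $b$ and total conditional variance $V$, for every $\delta < 1/e$ and horizon $n\ge 4$,
\[
    \bbP\left[\sum_{t=1}^{n} d^t \;>\; 2\max\left\{2\sqrt{V},\; b\sqrt{\ln(1/\delta)}\right\}\sqrt{\ln(1/\delta)}\right] \;\le\; N(n)\,\delta,
\]
where $N(n) = O(\ln n)$ is the number of variance scales in the peeling argument. Plugging in $b = 2(M+\tilde M)$, $V = \sigma^2$, $n = T$, and choosing $\delta$ so that $\ln(1/\delta) = \beta^2$, the right-hand threshold collapses to $2\max\{2\sigma,\,2(M+\tilde M)\beta\}\beta = 4\max\{\sigma\beta,\,(M+\tilde M)\beta^2\}$, exactly the bound claimed. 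With this choice $\delta = e^{-\beta^2} = 2p/(3\ln T)$, so $N(T)\delta \le p$; this is precisely where the constants $3$ and $\tfrac{1}{2}$ in the definition of $\beta$, together with the hypothesis $T\ge 8$ (which forces $N(T)\le\tfrac32\ln T$), earn their keep. The remaining side conditions are immediate: $T\ge 8$ gives $n\ge 4$ and, with $p<1$, also $\delta = 2p/(3\ln T)\le 2/(3\ln 8) < 1/e$ since $2e < 3\ln 8$.

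The main obstacle is bookkeeping rather than mathematics: one must fix the exact version (and constants) of the variance-adaptive martingale inequality being cited --- the leading factor of $2$, whether the logarithm in the failure probability is natural or base $2$, and the precise form of the count $N(\cdot)$ coming from peeling over dyadic (or $e$-adic) ranges of $V$ --- and then verify that $\delta = 2p/(3\ln T)$ simultaneously makes the total failure probability at most $p$, satisfies $\delta < 1/e$ in the regime $T\ge 8$, and reproduces exactly the constants $4$, $3$, and $\tfrac{1}{2}$ appearing in the statement. Apart from matching these numbers, the derivation is identical in spirit to the proof of \cref{prop:concentration}, merely exploiting $\sqrt{V}=\sigma$ in place of the crude $\sqrt{T}$.
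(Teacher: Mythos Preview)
Your proposal is correct and follows exactly the approach the paper takes: the paper does not give a detailed proof of this proposition at all, but simply states that it is obtained ``using the Freedman-style \citep{Freedman75:Tail} concentration result of \citet{Bartlett08:High} for martingale difference sequences'' applied to the same martingale difference sequence $\{d^t\}$ used for \cref{prop:concentration}. Your write-up in fact supplies more detail than the paper does, correctly noting that the changed hypothesis (bounds on $(\vec{\ell}^t)^\top\vec{z}$ rather than on $(\vec{\ell}^t)^\top(\vec{z}-\vec{z}')$) yields $|d^t|\le 2(M+\tilde M)$, and correctly identifying the constant-matching with the peeling count in Bartlett et al.\ as the only real work.
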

The concentration result of \cref{prop:concentration low var} takes into account the variance of the martingale difference sequences. When the variance is low, the dominant term in the right hand side of the inequality is $(M + \tilde{M})\beta^2 = O(\log\log T)$. On the other hand, when the variance is high (that is, $\sigma$ grows as $\sqrt{T}$), we recover a bound similar to the Azuma-Hoeffding inequality (albeit with a slightly worse polylog dependence on $T$).

Finally, our framework can also be applied to more general EFG-like problems, and thus this work also allows one to instantiate MCCFR or other stochastic methods for new sequential decision-making problems, for example by using the generalizations of CFR in~\citet{Farina19:Online} or~\citet{Farina19:Regret}.

%%% Local Variables:
%%% mode: latex
%%% TeX-master: "../stochastic-rm"
%%% End:

    \bibliographystyle{custom_arxiv}
    \bibliography{dairefs}

\iftrue
    \clearpage
    \onecolumn
    \appendix

    \section{Proofs}\label{app:proofs}

\subsection{Regret Guarantees when Gradient Estimators are
Used}

For completeness, we show a proof of \cref{prop:concentration}. As mentioned, it is an application of the Azuma-Hoeffding inequality for martingale difference sequences, which we now state (see, e.g., Theorem 3.14 of \citet{McDiarmid98:Concentration} for a proof).

\begin{theorem}[Azuma-Hoeffding inequality]
    \label{thm:azuma}
    Let $Y_1, \dots, Y_n$ be a martingale difference sequence with $a_k \le Y_k \le b_k$ for each $k$, for suitable constants $a_k, b_k$. Then for any $\tau \ge 0$,
\[
    \bbP\mleft[\sum Y_k \ge \tau \mright] \le e^{-2\tau^2 / \sum (b_k - a_k)^2}.
\]
\end{theorem}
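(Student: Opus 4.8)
The plan is to prove the inequality by the standard Chernoff-bound-plus-iterated-conditioning argument, whose only genuinely substantive ingredient is Hoeffding's lemma on the moment-generating function of a bounded, mean-zero random variable. Write $S_n \defeq \sum_{k=1}^n Y_k$ and let $\mathcal{F}_k$ denote the $\sigma$-algebra generated by $Y_1,\dots,Y_k$, so that the martingale-difference hypothesis reads $\bbE[Y_k \mid \mathcal{F}_{k-1}] = 0$. First I would apply the exponential Markov (Chernoff) inequality: for any $\lambda > 0$,
\[
    \bbP[S_n \ge \tau] = \bbP[e^{\lambda S_n} \ge e^{\lambda \tau}] \le e^{-\lambda \tau}\, \bbE[e^{\lambda S_n}],
\]
which reduces the problem to bounding the moment-generating function $\bbE[e^{\lambda S_n}]$ uniformly over $\lambda$.

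The key step is to control this MGF by peeling off one term at a time via the tower property. Since $S_{n-1}$ is $\mathcal{F}_{n-1}$-measurable, the factor $e^{\lambda S_{n-1}}$ can be pulled out of the inner conditional expectation, giving
\[
    \bbE[e^{\lambda S_n}] = \bbE\!\big[e^{\lambda S_{n-1}}\,\bbE[e^{\lambda Y_n} \mid \mathcal{F}_{n-1}]\big].
\]
Here I would invoke Hoeffding's lemma: if a random variable $X$ satisfies $\bbE[X]=0$ and $a \le X \le b$, then $\bbE[e^{\lambda X}] \le e^{\lambda^2 (b-a)^2/8}$. Applied conditionally on $\mathcal{F}_{n-1}$ to $Y_n$, which has conditional mean zero and lies in $[a_n,b_n]$, this yields $\bbE[e^{\lambda Y_n} \mid \mathcal{F}_{n-1}] \le e^{\lambda^2 (b_n-a_n)^2/8}$, a deterministic constant that can be factored out. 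Iterating this peeling $n$ times produces
\[
    \bbE[e^{\lambda S_n}] \le \exp\!\Big(\tfrac{\lambda^2}{8}\sum_{k=1}^n (b_k-a_k)^2\Big).
\]

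Finally I would optimize the free parameter. Combining the two displays gives $\bbP[S_n \ge \tau] \le \exp(-\lambda\tau + \tfrac{\lambda^2}{8}\sum_k (b_k-a_k)^2)$ for every $\lambda > 0$; the exponent is a convex quadratic in $\lambda$ minimized at $\lambda^\star = 4\tau / \sum_k (b_k-a_k)^2$, and substituting $\lambda^\star$ collapses the bound to the claimed $e^{-2\tau^2/\sum_k(b_k-a_k)^2}$. The main obstacle is Hoeffding's lemma itself, which I would handle by convexity of $x \mapsto e^{\lambda x}$: writing $e^{\lambda X} \le \frac{b-X}{b-a}e^{\lambda a} + \frac{X-a}{b-a}e^{\lambda b}$ and taking expectations with $\bbE[X]=0$ expresses the MGF as $e^{\phi(\lambda)}$ for an explicit one-variable $\phi$ with $\phi(0)=\phi'(0)=0$ and $\phi''(\lambda) \le (b-a)^2/4$, so a second-order Taylor expansion delivers the $e^{\lambda^2(b-a)^2/8}$ estimate. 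The one point requiring care throughout is the measurability bookkeeping---ensuring $e^{\lambda S_{k-1}}$ is genuinely $\mathcal{F}_{k-1}$-measurable before extracting it from each conditional expectation---but this is exactly what the martingale-difference structure guarantees.
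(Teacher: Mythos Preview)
Your proof sketch is the standard and correct argument for the Azuma--Hoeffding inequality: Chernoff's method, iterated conditioning to peel off one increment at a time, Hoeffding's lemma applied conditionally, and optimization over $\lambda$. There is nothing wrong with it.

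However, the paper does not actually prove this theorem. It is stated as a known result and the reader is referred to Theorem~3.14 of McDiarmid (1998) for a proof. So there is no ``paper's own proof'' to compare against; you have supplied a proof where the paper supplies only a citation. The argument you give is essentially the one found in standard references such as the one cited.
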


\propconcentration*
\begin{proof}
    As observed in the body, $d^t \defeq (\vec{\ell}^t)^\top (\vec{z}^t - \vec{u}) - (\tilde{\vec{\ell}}^t)^\top (\vec{z}^t -\vec{u})$ is a martingale difference sequence. Furthermore, at all times $t$,
    \begin{align*}
        |d^t| &= |(\vec{\ell}^t)^\top (\vec{z}^t - \vec{u}) - (\tilde{\vec{\ell}}^t)^\top (\vec{z}^t -\vec{u})| \\
              &\le |(\vec{\ell}^t)^\top (\vec{z}^t - \vec{u})| + |(\tilde{\vec{\ell}}^t)^\top (\vec{z}^t -\vec{u})|\\
              &\le M + \tilde{M},
    \end{align*}
    and therefore $-(M + \tilde{M}) \le d^t \le (M + \tilde{M})$ for each $t$.

    Furthermore, 
    \[
        \sum_{t=1}^T d^t = \mleft(\sum_{t=1}^T (\vec{\ell}^t)^\top (\vec{z}^t - \vec{u})\mright) - \mleft(\sum_{t=1}^T (\tilde{\vec{\ell}}^t)^\top (\vec{z}^t - \vec{u})\mright) = R^T(\vec{u}) - \tilde{R}^T(\vec{u}).
    \]

    So, using \cref{thm:azuma}, for all $\tau \ge 0$
    \begin{align*}
        \bbP\mleft[ R^T(\vec{u}) \le \tilde{R}^T(\vec{u}) + \tau\mright]
        &= \bbP\mleft[ \sum_{t=1}^T d^t \le \tau \mright]\\
        &= 1-\bbP\mleft[ \sum_{t=1}^T d^t \ge \tau \mright]\\
        &\ge 1-\exp\mleft\{-\frac{2\tau^2}{\sum_{t=1}^T 4(M + \tilde{M})^2}\mright\}\\
        &= 1-\exp\mleft\{-\frac{2\tau^2}{4T (M + \tilde{M})^2}\mright\}.
    \end{align*}
    Finally, substituting $\tau = (M + \tilde{M})\sqrt{2T \log(1/p)}$ yields the statement.
\end{proof}

\subsection{Properties of the Outcome Sampling Gradient Estimator}

Let $\vec{w}^t \in \cX$ be an arbitrary strategy for Player $1$. Furthermore, let $\tilde{z}^t \in Z$ be a random variable such that for all $z\in Z$,
\[
    \bbP_t[\tilde{z}^t = z] = w^t[\sigma_1(z)]\cdot y^t[\sigma_2(z)]\cdot c[\sigma_\chancepl(z)],
\] and let $\vec{e}_z$ be defined as the vector such that $e_z[\sigma_1(z)] = 1$ and $e_z[\sigma] = 0$ for all other $\sigma \in \Sigma_1, \sigma \neq \sigma_1(z)$.

\begin{lemma}
The random vector
\[
  \tilde{\vec{\ell}}_1^t \defeq \frac{u_2(\tilde{z}^t)}{w^t[\sigma_1(\tilde{z}^t)]}\vec{e}_{\tilde{z}^t}
\]
is such that $\bbE_t[\tilde{\vec{\ell}}_1^t] = \vec{\ell}_1^t$.
\end{lemma}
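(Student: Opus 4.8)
The plan is to verify the claimed identity $\bbE_t[\tilde{\vec{\ell}}_1^t] = \vec{\ell}_1^t = \mat{A}_2\vec{y}^t$ one coordinate at a time. Fix a sequence $\sigma\in\Sigma_1$. By the definition of $\vec{e}_{\tilde z^t}$, the $\sigma$-th coordinate $\tilde\ell_1^t[\sigma]$ vanishes unless the sampled leaf satisfies $\sigma_1(\tilde z^t)=\sigma$, in which case it equals $u_2(\tilde z^t)/w^t[\sigma]$. Conditioning on the time-$t$ history (with respect to which $\vec w^t,\vec y^t,\vec c$ are fixed and $\tilde z^t$ is drawn from the prescribed distribution), this gives
\[
  \bbE_t\big[\tilde\ell_1^t[\sigma]\big] \;=\; \sum_{z\in Z\,:\,\sigma_1(z)=\sigma} \bbP_t[\tilde z^t=z]\cdot\frac{u_2(z)}{w^t[\sigma]}.
\]
First I would substitute $\bbP_t[\tilde z^t=z] = w^t[\sigma_1(z)]\cdot y^t[\sigma_2(z)]\cdot c[\sigma_\chancepl(z)]$; since every surviving term has $\sigma_1(z)=\sigma$, the factor $w^t[\sigma]$ in the numerator cancels the denominator, leaving
\[
  \bbE_t\big[\tilde\ell_1^t[\sigma]\big] \;=\; \sum_{z\in Z\,:\,\sigma_1(z)=\sigma} u_2(z)\cdot y^t[\sigma_2(z)]\cdot c[\sigma_\chancepl(z)].
\]

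The second step is to recognize the right-hand side as $(\mat{A}_2\vec{y}^t)[\sigma]$. From the trilinear expression \eqref{eq:expected payoff pl2}, $\bar u_2(\vec x,\vec y^t,\vec c)=\vec x^{\!\top}\mat{A}_2\vec y^t$ is linear in $\vec x$, and the coefficient multiplying $x[\sigma]$ is exactly $\sum_{z:\sigma_1(z)=\sigma} u_2(z)\,y^t[\sigma_2(z)]\,c[\sigma_\chancepl(z)]$, which is by definition the $\sigma$-th entry of $\mat{A}_2\vec{y}^t$, i.e.\ $\ell_1^t[\sigma]$. Chaining the two displays yields $\bbE_t[\tilde\ell_1^t[\sigma]]=\ell_1^t[\sigma]$ for every $\sigma\in\Sigma_1$, which is the claim.

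The only delicate point is the division by $w^t[\sigma]$: if $w^t[\sigma]=0$ for some sequence, then $\bbP_t[\tilde z^t=z]=0$ for all leaves $z$ with $\sigma_1(z)=\sigma$, so such leaves are never sampled and contribute nothing to either side; one therefore either restricts to strategies $\vec w^t$ of full support (as the balanced strategy $\vec w^*$ provides) or simply omits the null terms from the sum. Beyond this bookkeeping there is no real obstacle — the result is a direct computation — so the ``hard part'' is merely tracking which sequence index is active on the sampled path and matching the surviving sum against the definition of the sequence-form payoff matrix.
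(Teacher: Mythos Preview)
Your proof is correct and follows essentially the same direct computation as the paper: the paper tests the expected estimator against an arbitrary vector $\vec{x}\in\bbR^{|\Sigma_1|}$ and matches the resulting sum to $\bar u_2(\vec{x},\vec{y}^t,\vec{c})=(\vec{\ell}_1^t)^\top\vec{x}$, which is just the ``all coordinates at once'' version of your coordinate-by-coordinate argument. Your remark on the $w^t[\sigma]=0$ case is a welcome addition that the paper leaves implicit.
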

\begin{proof} For all $\vec{x} \in \bbR^{|\Sigma_1|}$,
    \begin{align*}
        \bbE_t[\vec{\ell}_1^t]^\top \vec{x} &= \mleft(\sum_{z \in Z} \bbP[\tilde{z}^t = z] \cdot \frac{u_1(z)}{w^t[\sigma_1(z)]}\vec{e}_{z}\mright)^{\!\!\!\top}\! \vec{x}\\
        &= \mleft(\sum_{z \in Z} u_2(z)\cdot y^t[\sigma_2(z)]\cdot c[\sigma_\chancepl(z)]\cdot\vec{e}_{z}\mright)^{\!\!\!\top}\!\vec{x}\\
        &= \sum_{z \in Z} u_2(z)\cdot y^t[\sigma_2(z)]\cdot c[\sigma_\chancepl(z)]\cdot(\vec{e}_{z}^{\top}\vec{x})\\
        &= \sum_{z \in Z} u_2(z)\cdot y^t[\sigma_2(z)]\cdot c[\sigma_\chancepl(z)]\cdot x[\sigma_1(z)]\\
        &= u_2(\vec{x}, \vec{y}^t, \vec{c}) = \vec{\ell}_1^\top \vec{x}.
    \end{align*}
    Since the equality holds for all $\vec{x} \in \bbR^{|\Sigma_1|}$, we conclude $\bbE_t[\tilde{\vec{\ell}}_1^t] = \vec{\ell}_1$.
\end{proof}

Furthermore,

\begin{lemma}
    For all $\vec{x}, \vec{x}' \in \cX$,
    \[
        (\tilde{\vec{\ell}}_1)^\top (\vec{x} - \vec{x}') \le \Delta\cdot\max_{\sigma \in \Sigma_1} \frac{1}{w^t[\sigma]}.
    \]
\end{lemma}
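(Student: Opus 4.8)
The plan is to use the fact that, by construction, $\tilde{\vec\ell}_1^t$ is a scalar multiple of a single standard basis vector, so that pairing it against any difference $\vec x - \vec x'$ collapses to a one-dimensional quantity. Concretely, writing $\sigma^\star \defeq \sigma_1(\tilde z^t)$ for the (random) sequence on which $\vec e_{\tilde z^t}$ is supported, for every realization of $\tilde z^t$ we have the identity
\[
    (\tilde{\vec\ell}_1^t)^\top(\vec x - \vec x') \;=\; \frac{u_2(\tilde z^t)}{w^t[\sigma^\star]}\bigl(x[\sigma^\star] - x'[\sigma^\star]\bigr),
\]
and the bound follows by estimating the two factors on the right separately. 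Note in passing that whenever the event $\tilde z^t = z$ has positive probability we have $w^t[\sigma_1(z)] > 0$, so the division is well defined on that event; and for any $\sigma$ with $w^t[\sigma] = 0$ the quantity $1/w^t[\sigma]$ is $+\infty$, making the claimed inequality vacuous, so there is no loss in assuming $w^t$ is everywhere positive.

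First I would bound the strategy factor: every entry of a sequence-form strategy is a product of action probabilities, hence lies in $[0,1]$, so $x[\sigma^\star], x'[\sigma^\star] \in [0,1]$ and therefore $|x[\sigma^\star] - x'[\sigma^\star]| \le 1$. (If $\sigma^\star = \emptyseq$ this difference is in fact $0$, since $z[\emptyseq]=1$ for all sequence-form strategies, but the weaker bound is all that is needed.) Next I would bound the payoff factor using the zero-sum hypothesis $u_1 = -u_2$: under it the payoff range unpacks as $\Delta = \max_{z\in Z}|u_2(z)| - \bigl(-\max_{z\in Z}|u_2(z)|\bigr) = 2\max_{z\in Z}|u_2(z)|$, so $|u_2(z)| \le \Delta/2 \le \Delta$ for every leaf, in particular $|u_2(\tilde z^t)| \le \Delta$.

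Combining the two estimates gives $(\tilde{\vec\ell}_1^t)^\top(\vec x - \vec x') \le |u_2(\tilde z^t)|/w^t[\sigma^\star] \le \Delta/w^t[\sigma^\star] \le \Delta\cdot\max_{\sigma\in\Sigma_1} 1/w^t[\sigma]$, which is the claim; exchanging the roles of $\vec x$ and $\vec x'$ yields the same bound for the absolute value $|(\tilde{\vec\ell}_1^t)^\top(\vec x - \vec x')|$, which is exactly the constant $\tilde M$ needed to invoke \cref{prop:concentration} for this estimator. I do not expect a genuine obstacle: the only points requiring care are making the reduction to a single coordinate explicit, invoking the zero-sum property (without which $|u_2(z)|$ need not be controlled by $\Delta$), and handling the degenerate cases ($\sigma^\star = \emptyseq$, or $w^t[\sigma]=0$) so that the maximum over $\sigma\in\Sigma_1$ is the correct bounding object.
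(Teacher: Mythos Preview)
Your proposal is correct and follows essentially the same approach as the paper's proof: reduce to a single coordinate via the definition of $\tilde{\vec\ell}_1^t$, bound $|x[\sigma^\star]-x'[\sigma^\star]|\le 1$ using that sequence-form entries lie in $[0,1]$, and bound $|u_2(\tilde z^t)|$ by $\Delta$. Your treatment is in fact slightly more careful than the paper's---you explicitly invoke the zero-sum hypothesis to justify $|u_2(z)|\le\Delta$ (even obtaining the sharper $\Delta/2$) and you address the degenerate cases $w^t[\sigma]=0$ and $\sigma^\star=\emptyseq$, which the paper leaves implicit.
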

\begin{proof}
    Using the definition of $\tilde{\vec{\ell}}_1$,
    \begin{align*}
        (\tilde{\vec{\ell}}_1)^\top (\vec{x} - \vec{x}') &= \frac{u_2(\tilde{z}^t)}{w^t[\sigma_1(\tilde{z}^t)]}\mleft( x[\sigma_1(\tilde{z}^t)] - x'[\sigma_1(\tilde{z}^t)] \mright).
    \end{align*}
    Since each entry of $\vec{x}$ and $\vec{x}'$ is in the interval $[0,1]$, the quantity $x[\sigma_1(\tilde{z}^t)] - x'[\sigma_1(\tilde{z}^t)]$ has absolute value in $[0,1]$ as well. Hence,
\[
\mleft|(\tilde{\vec{\ell}}_1)^\top (\vec{x} - \vec{x}')\mright| \le \max_{z \in Z} \mleft|\frac{u_2(z)}{w^t[\sigma_1(z)])}\mright| \le \Delta\cdot \max_{\sigma \in \Sigma_1} \frac{1}{w^t[\sigma]}
\]
as we wanted to show.
\end{proof}

\subsection{Balanced Strategy}

We now describe the construction of the \emph{balanced strategy} $\vec{w}^*$. Given $\sigma \in \Sigma_1$, we let $\mathcal{C}_\sigma$ be the set of information sets $I in \infos{1}$ such that $\sigma_1(I) = \sigma$. Furthermore, let $m_\sigma$, for $\sigma \in\Sigma_1$, be the number of terminal sequences in the subtree rooted under $\sigma$; formally, $m_\sigma$ is defined recursively as
\[
    m_\sigma = \begin{cases}
        1 & \text{if $\mathcal{C}_\sigma = \emptyseq$;}\\
        \displaystyle\sum_{I \in \mathcal{C}_\sigma}\sum_{a \in A_I} m_{(I, a)} & \text{otherwise.}
    \end{cases}
\]
Clearly, $m_\sigma \le |\Sigma_1| - 1$, since the empty sequence is never terminal (assuming Player $1$ acts at least once). With that, we define $\vec{w}^*$ such that $w^*[\emptyseq] = 1$ and that for all $\sigma=(I,a) \in \Sigma_1$,
\[
    w^*[\sigma] = \frac{m_\sigma}{\sum_{a' \in A_I}m_{(I, a')}} w^*[\sigma_1(I)].
\]
It is immediate to verify that $\vec{w}^*$ is indeed a valid sequence-form strategy. Furthermore, since for all $I \in \infos{1}$, $I \in \mathcal{C}_{\sigma_1(I)}$, we have
\[
    \sum_{a' \in A_I}m_{(I, a')} \le m_{\sigma(I)}.
\]
So,
\[
    w^*[\sigma] \ge \frac{m_\sigma}{m_{\sigma_1(I)}} w^*[\sigma_1(I)].
\]
By recursively expanding the definition of $w^*[\sigma_1(I)]$ on the right-hand side until $\sigma_1(I) = \emptyseq$, we ultimately obtain
\[
    w^*[\sigma] \ge \frac{1}{m_{\emptyseq}} \ge \frac{1}{|\Sigma_1| - 1}
\]
for all $\sigma$, as we wanted to show. 
    \section{Description of the Game Instances Used in the Experiments}\label{app:games}

We run our experiments on four different games, each described below.

\emph{Leduc poker} is a standard benchmark in the EFG-solving
community~\cite{Southey05:Bayes}. Our variant, Leduc 13, has a deck of 13 unique
cards, with two copies of each card. The game consists of two rounds. In the
first round, each player places an ante of $1$ in the pot and receives a single
private card. A round of betting then takes place with a two-bet maximum, with
Player 1 going first. A public shared card is then dealt face up and another
round of betting takes place. Again, Player 1 goes first, and there is a two-bet
maximum. If one of the players has a pair with the public card, that player
wins. Otherwise, the player with the higher card wins. All bets in the first
round are $1$, while all bets in the second round are $2$. This game has 166336
nodes and 6007 sequences per player.

\emph{Goofspiel} The variant of Goofspiel~\citep{Ross71:Goofspiel} that we use
in our experiments is a two-player card game, employing three identical decks
of 4 cards each. At the beginning of the game, each player receives one of the
decks to use it as its own hand, while the last deck is put face down between
the players, with cards in increasing order of rank from top to bottom. Cards
from this deck will be the prizes of the game. In each round, the players
privately select a card from their hand as a bet to win the topmost card in the
prize deck. The selected cards are simultaneously revealed, and the highest one
wins the prize card. In case of a tie, the prize card is discarded. Each prize
card’s value is equal to its face value, and at the end of the game the players’
score are computed as the sum of the values of the prize cards they have won.
This game has 54421 nodes and 21329 sequences per player.

\emph{Search} is a security-inspired pursuit-evasion game. The game is played on
the graph shown in Figure~\ref{fig:search_game}.

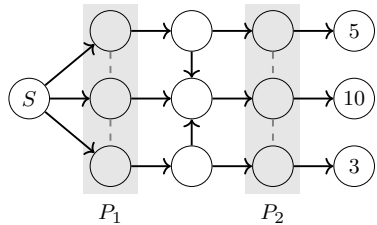
\begin{figure}[H]
  \centering
  \scalebox{0.9}{
  \begin{tikzpicture}
\path[fill=black!10!white] (.8, -1.4) rectangle (1.6, 1.4);
\path[fill=black!10!white] (3.2, -1.4) rectangle (4.0, 1.4);
\node at (1.2, -1.7) {$P_1$};
\node at (3.6, -1.7) {$P_2$};

  \node[draw, circle, minimum width=.6cm, inner sep=0] (A) at (0, 0) {$S$};
  \node[draw, circle, minimum width=.6cm] (B) at (1.2, 1) {};
  \node[draw, circle, minimum width=.6cm] (C) at (1.2, 0) {};
  \node[draw, circle, minimum width=.6cm] (D) at (1.2, -1) {};
    \node[draw, circle, minimum width=.6cm] (E) at (2.4, 1) {};
    \node[draw, circle, minimum width=.6cm] (F) at (2.4, 0) {};
    \node[draw, circle, minimum width=.6cm] (G) at (2.4, -1) {};
      \node[draw, circle, minimum width=.6cm] (H) at (3.6, 1) {};
      \node[draw, circle, minimum width=.6cm] (I) at (3.6, 0) {};
      \node[draw, circle, minimum width=.6cm] (J) at (3.6, -1) {};
            \node[draw, circle, minimum width=.6cm,inner sep=0] (K) at (4.8, 1) {$5$};
            \node[draw, circle, minimum width=.6cm,inner sep=0] (L) at (4.8, 0) {$10$};
            \node[draw, circle, minimum width=.6cm,inner sep=0] (M) at (4.8, -1) {$3$};

\draw[thick,->] (A) edge (B);
\draw[thick,->] (A) edge (C);
\draw[thick,->] (A) edge (D);
\draw[thick,->] (B) edge (E);
\draw[thick,->] (C) edge (F);
\draw[thick,->] (D) edge (G);
\draw[thick,->] (E) edge (F);
\draw[thick,->] (G) edge (F);
\draw[thick,->] (E) edge (H);
\draw[thick,->] (F) edge (I);
\draw[thick,->] (G) edge (J);
\draw[thick,->] (H) edge (K);
\draw[thick,->] (I) edge (L);
\draw[thick,->] (J) edge (M);

\draw[thick,gray,dashed] (B) edge (C);
\draw[thick,gray,dashed] (C) edge (D);

\draw[thick,gray,dashed] (H) edge (I);
\draw[thick,gray,dashed] (I) edge (J);
\end{tikzpicture}
  }
  \vspace{-3mm}
  \caption{The graph on which the search game is played.}
  \label{fig:search_game}
  \vspace{-1mm}
\end{figure}

It is a simultaneous-move game
(which can be modeled as a turn-taking EFG with appropriately chosen information
sets). The defender controls two patrols that can each move within their
respective shaded areas (labeled P1 and P2). At each time step the controller
chooses a move for both patrols. The attacker is always at a single node on the
graph, initially the leftmost node labeled $S$. The attacker can move freely to
any adjacent node (except at patrolled nodes, the attacker cannot move from a
patrolled node to another patrolled node). The attacker can also choose to wait
in place for a time step in order to clean up their traces. If a patrol visits a
node that was previously visited by the attacker, and the attacker did not wait
to clean up their traces, they can see that the attacker was there. If the
attacker reaches any of the rightmost nodes they receive the respective payoff
at the node ($5$, $10$, or $3$, respectively). If the attacker and any patrol
are on the same node at any time step, the attacker is captured, which leads to
a payoff of $-1$ for the attacker and a payoff of $1$ for the defender. Finally,
the game times out after $k$ simultaneous moves, in which case both players defender
receive payoffs $0$. Search-4
(Search-5) has 21613 (87,927) nodes, 2029 (11,830) defender sequences, and 52
(69) attacker sequences.

Our search game is a zero-sum variant of the one used by \citet{Kroer18:Robust}. A similar search game  considered by \citet{Bosansky14:Exact} and \citet{Bosansky15:Sequence}.

\emph{Battleship} is a parametric version of a classic board game, where two
competing fleets take turns shooting at each other~\citep{Farina19:Correlation}.
At the beginning of the game, the players take turns at secretly placing a set
of ships on separate grids (one for each player) of size $3\times 2$. Each ship
has size 2 (measured in terms of contiguous grid cells) and a value of 1, and
must be placed so that all the cells that make up the ship are fully contained
within each player’s grids and do not overlap with any other ship that the
player has already positioned on the grid. After all ships have been placed. the
players take turns at firing at their opponent. Ships that have been hit at all
their cells are considered sunk. The game continues until either one player has
sunk all of the opponent’s ships, or each player has completed r shots. At the
end of the game, each player’s payoff is calculated as the sum of the values of
the opponent’s ships that were sunk, minus the sum of the values of ships which
that player has lost. The game has 732607 nodes, 73130 sequences for player 1,
and 253940 sequences for player 2. 
    \section{Additional Experimental Results}
\label{app:additional experiments}

\subsection{External Sampling}
The Search-5 plot omitted from the main paper is shown here.
\begin{figure}[H]
  \centering
  \includegraphics[width=0.49\columnwidth]{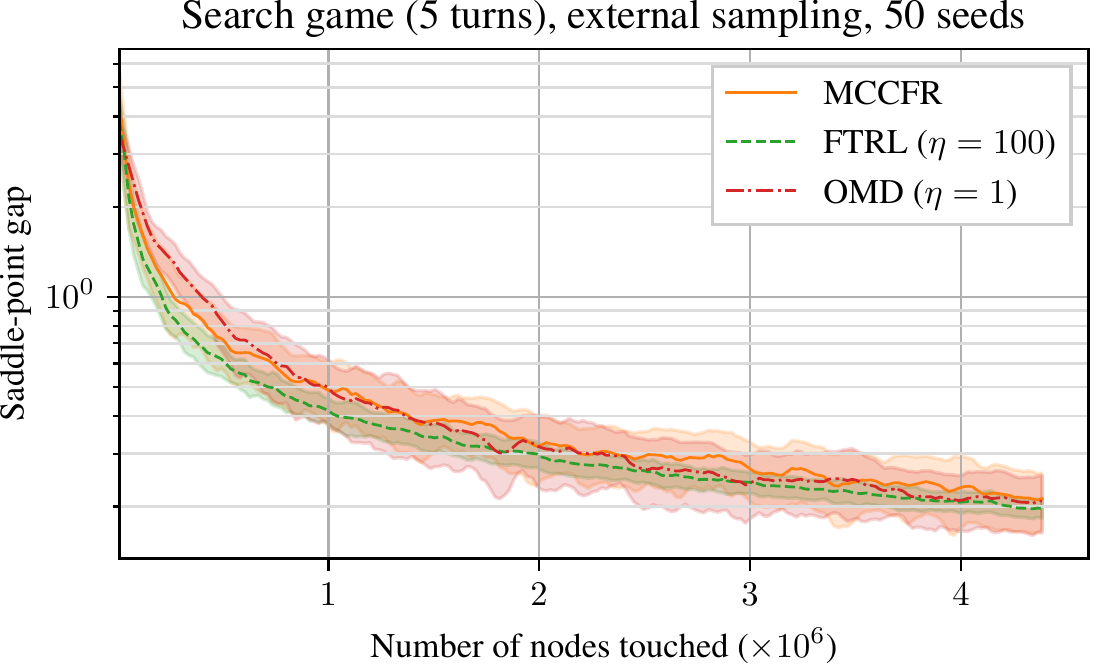}
  \vspace{-2mm}
  \caption{Performance of MCCFR, FTRL, and OMD with external sampling on Search-5.}
  \label{fig:search5_external}
  \vspace{-2mm}
\end{figure}

Figures~\ref{fig:appendix bs external} through~\ref{fig:appendix search5 external} show the performance of FTRL and OMD for all four stepsizes that we tried on each game: $\eta=0.1,1,10,100$.

\begin{figure}[H]
  \centering
  \includegraphics[width=0.49\columnwidth]{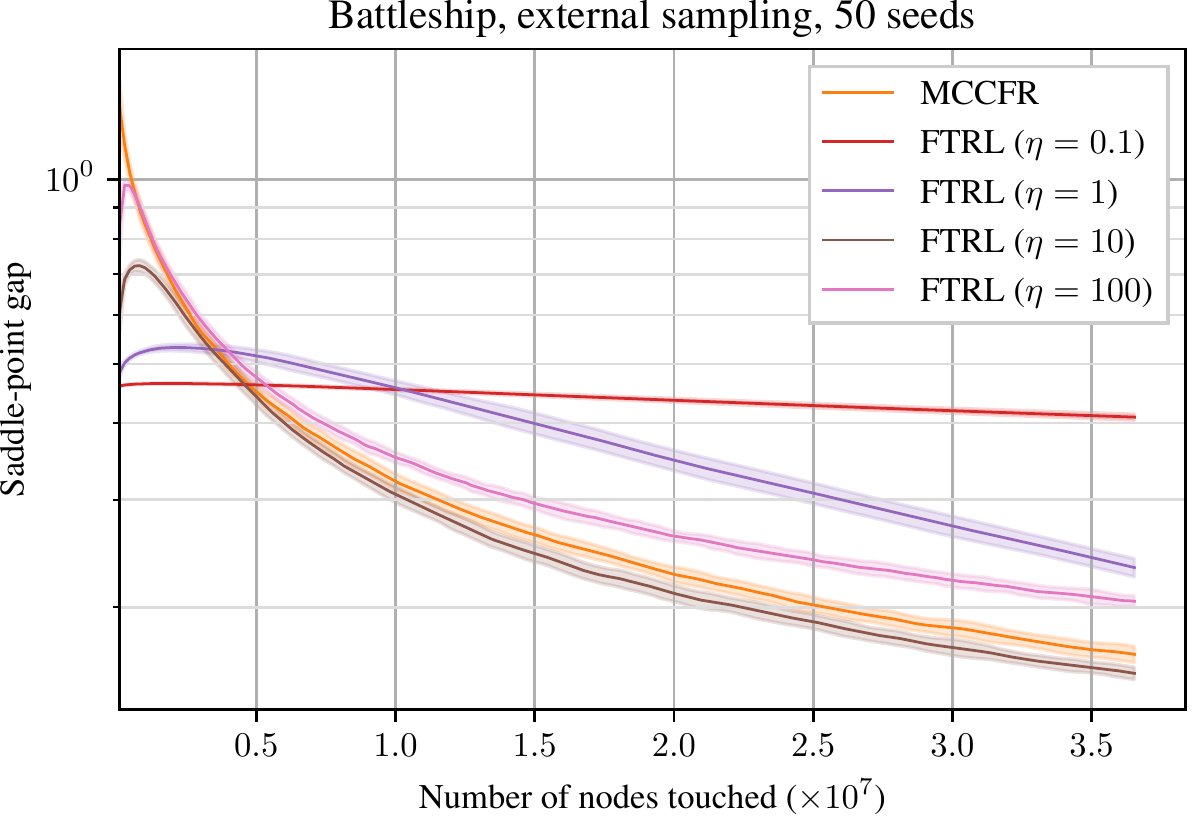}
  \includegraphics[width=0.49\columnwidth]{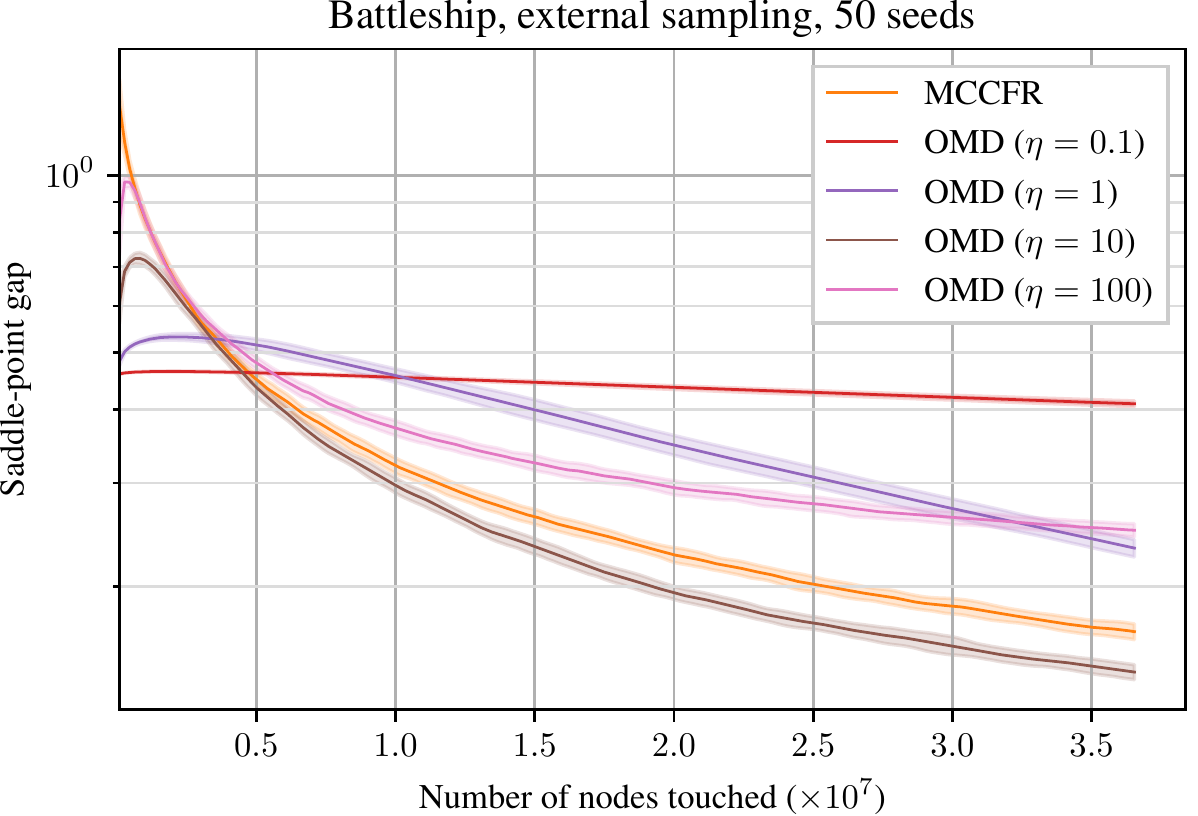}
  \vspace{-2mm}
  \caption{Performance of FTRL and OMD with four stepsizes on Battleship with external sampling. MCCFR shown for reference}
  \vspace{-2mm}
  \label{fig:appendix bs external}
\end{figure}
\begin{figure}[H]
  \centering
  \includegraphics[width=0.49\columnwidth]{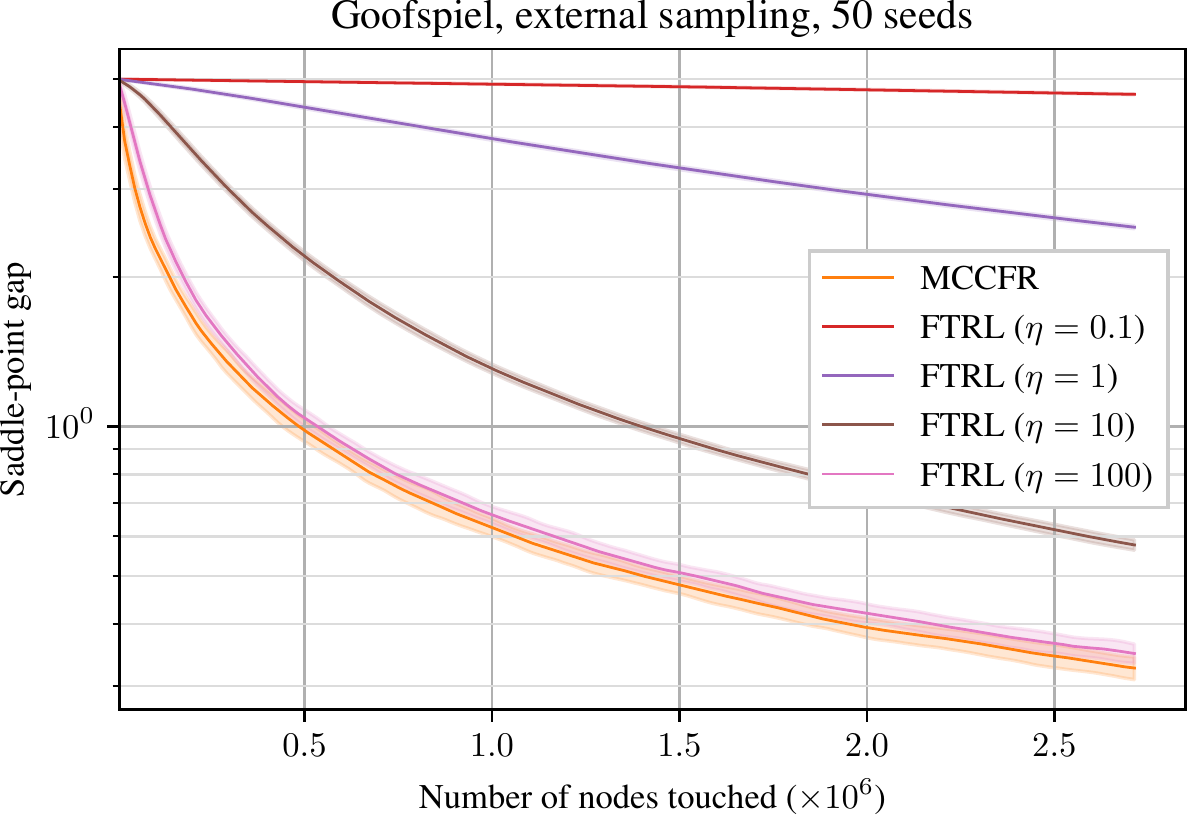}
  \includegraphics[width=0.49\columnwidth]{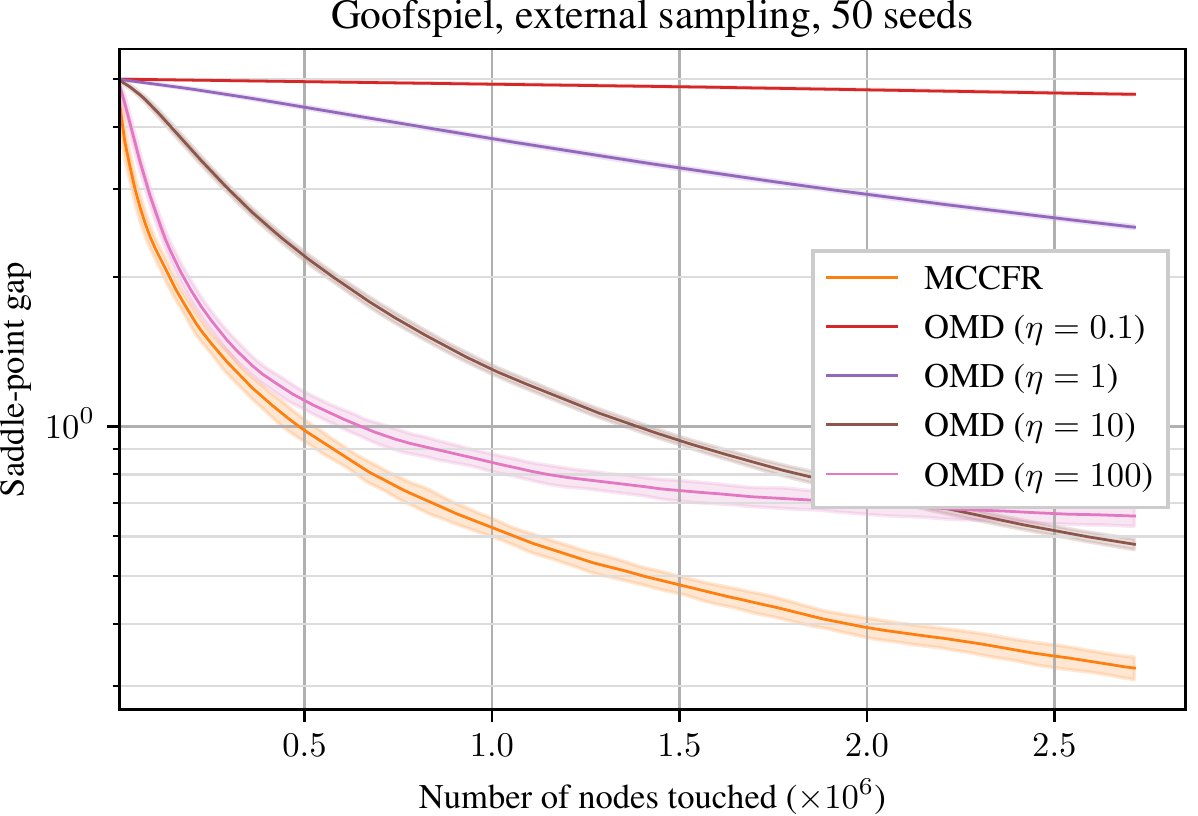}
  \vspace{-2mm}
  \caption{Performance of FTRL and OMD with four stepsizes on Goofspiel with external sampling. MCCFR shown for reference}
  \vspace{-2mm}
  \label{fig:appendix goofspiel external}
\end{figure}
\begin{figure}[H]
  \centering
  \includegraphics[width=0.49\columnwidth]{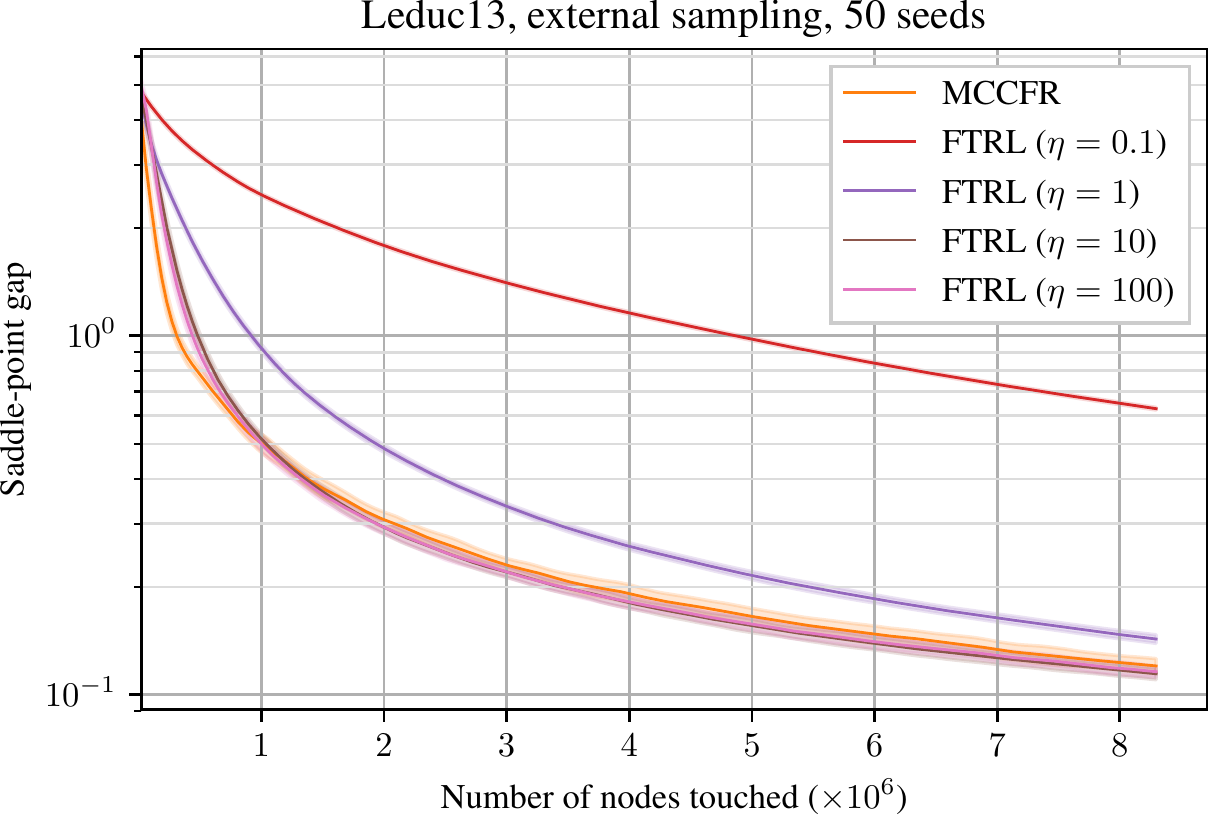}
  \includegraphics[width=0.49\columnwidth]{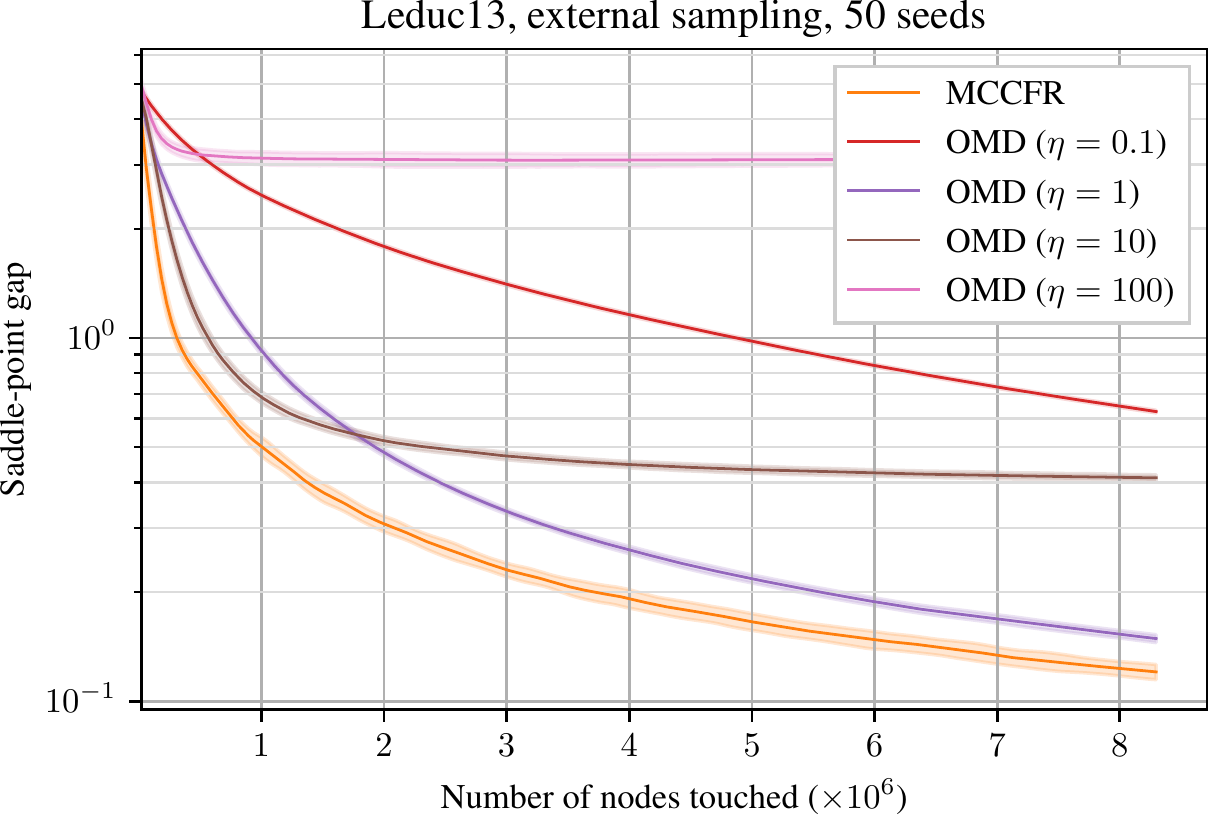}
  \caption{Performance of FTRL and OMD with four stepsizes on Leduc 13 with external sampling. MCCFR shown for reference}
  \label{fig:appendix leduc external}
\end{figure}
\begin{figure}[H]
  \centering
  \includegraphics[width=0.49\columnwidth]{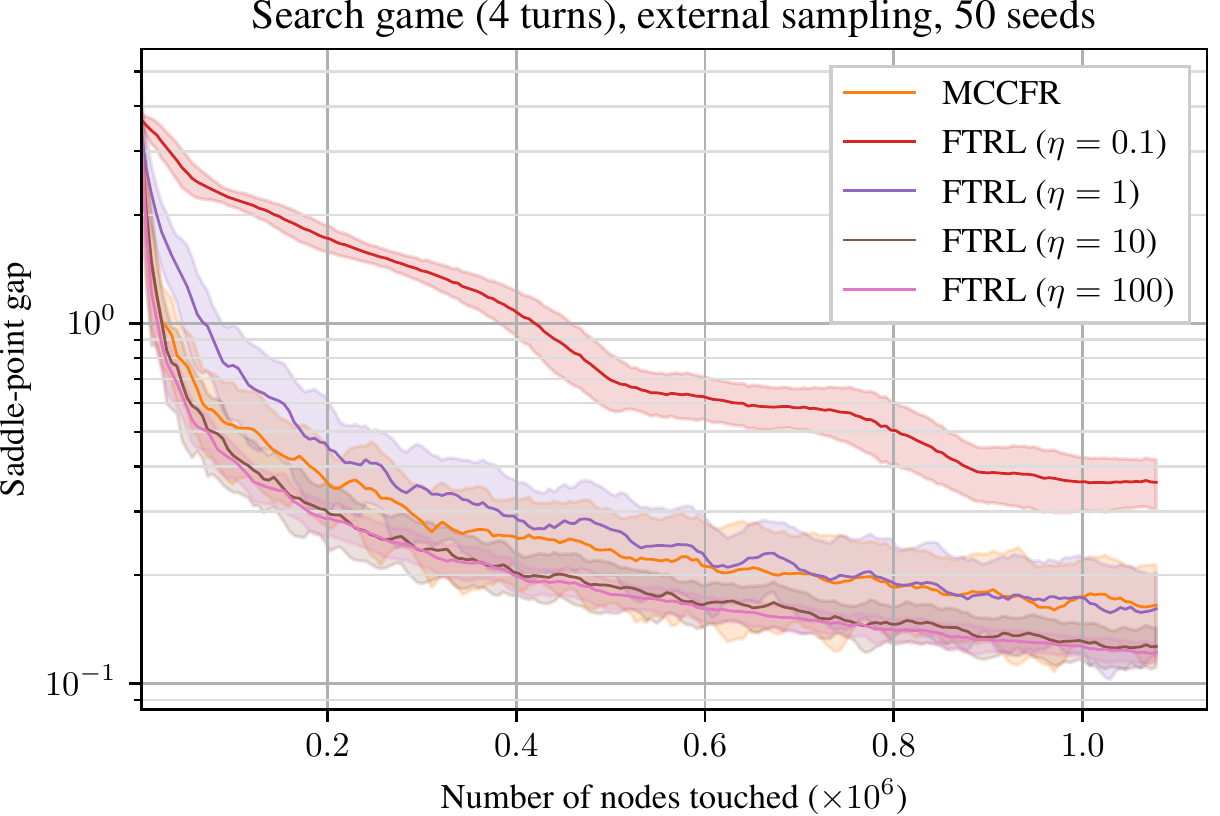}
  \includegraphics[width=0.49\columnwidth]{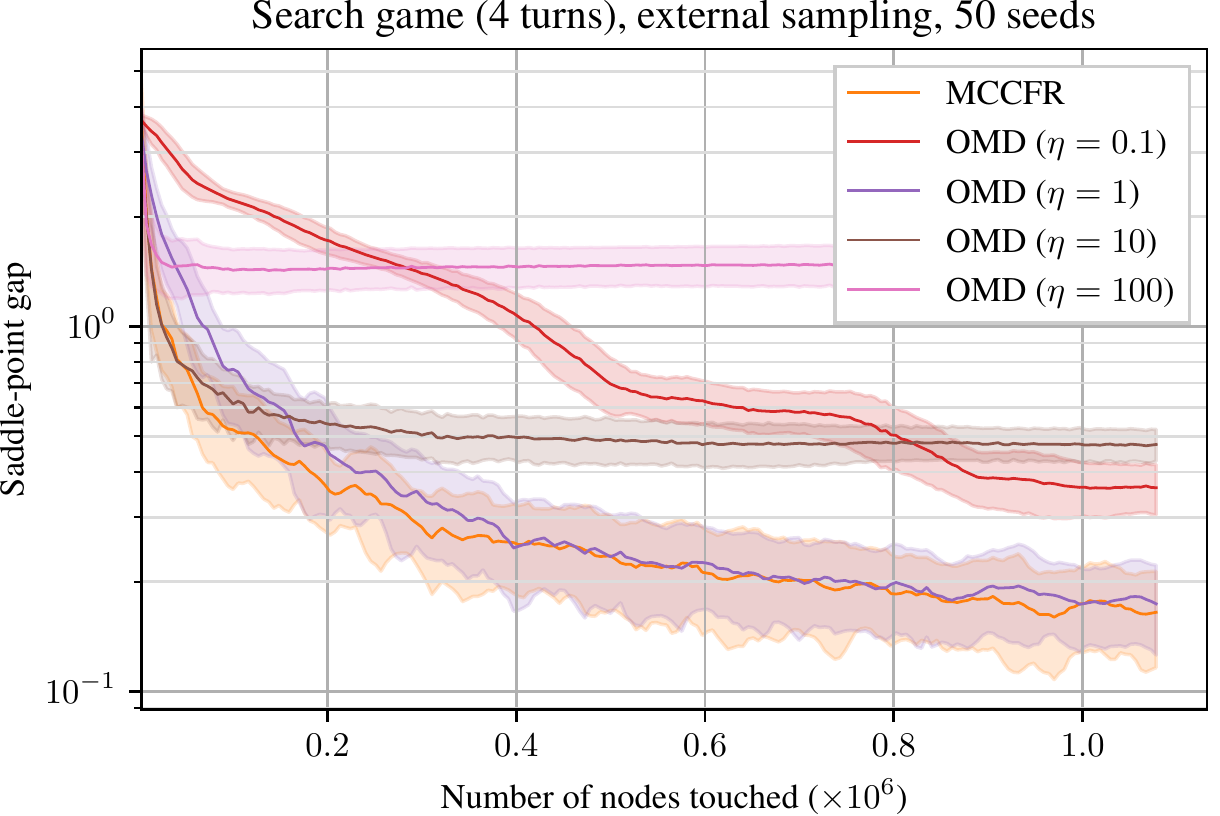}
  \caption{Performance of FTRL and OMD with four stepsizes on Search-4 with external sampling. MCCFR shown for reference}
  \label{fig:appendix search4 external}
\end{figure}
\begin{figure}[H]
  \centering
  \includegraphics[width=0.49\columnwidth]{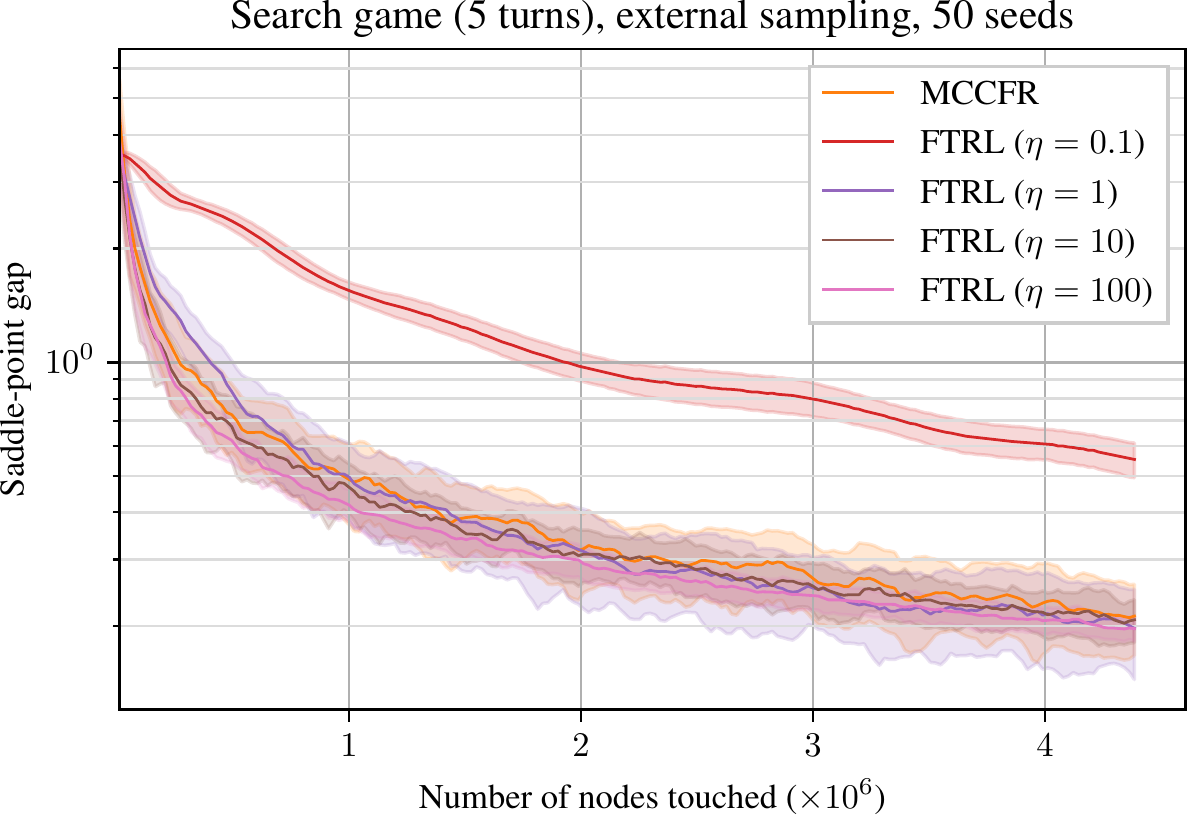}
  \includegraphics[width=0.49\columnwidth]{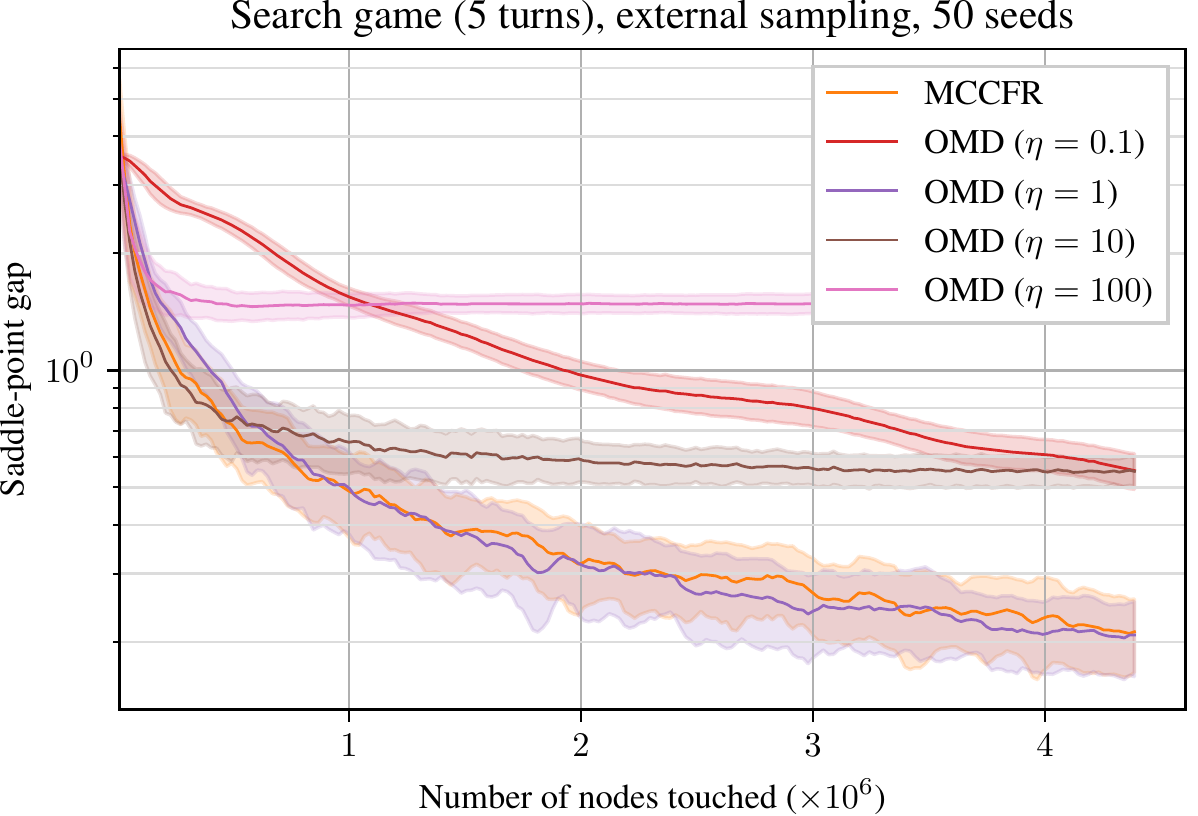}
  \caption{Performance of FTRL and OMD with four stepsizes on Search-5 with external sampling. MCCFR shown for reference}
  \label{fig:appendix search5 external}
\end{figure}

\subsection{Balanced Outcome Sampling}

The Search-4 plot omitted from the main paper is shown here.
\begin{figure}[H]
  \centering
  \includegraphics[width=0.49\columnwidth]{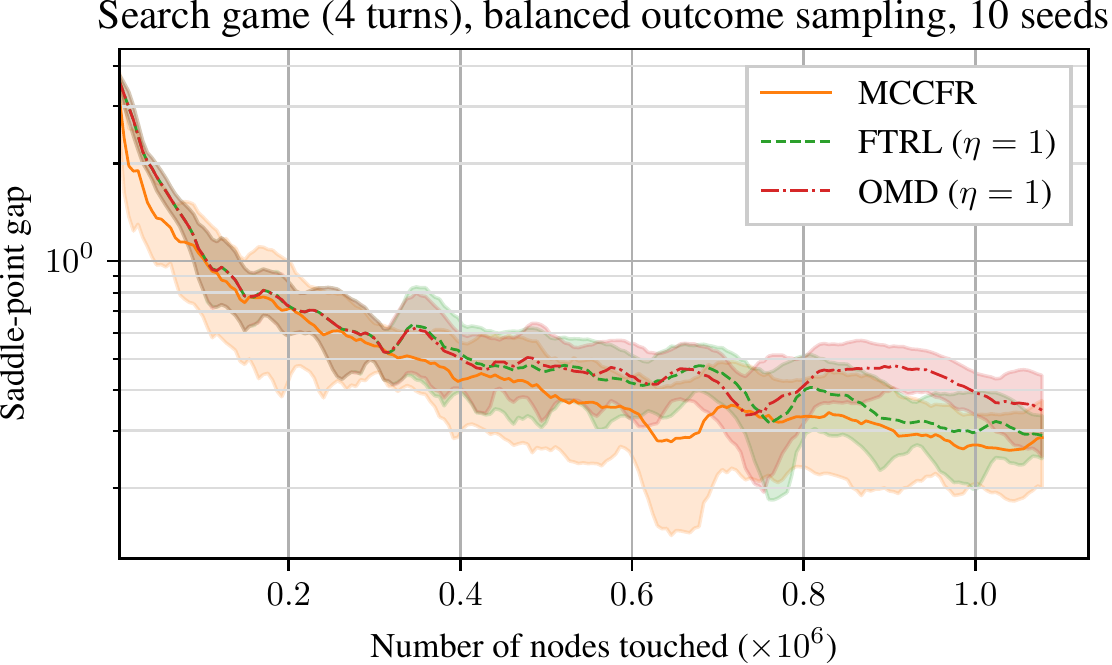}
  \caption{Performance of MCCFR, FTRL, and OMD with outcome sampling on Search-4.}
  \label{fig:search4_outcome}
\end{figure}
Figure~\ref{fig:search4_outcome} shows the performance on Search-4 and Search-5 with outcome sampling. In Search-4 we find that MCCFR performs better than FTRL and OMD, though FTRL is comparable at later iterations.

Figures~\ref{fig:appendix bs outcome} through~\ref{fig:appendix search5 outcome} show the performance of FTRL and OMD with outcome sampling for all four stepsizes that we tried on each game: $\eta=0.1,1,10,100$.

\begin{figure}[H]
  \centering
  \includegraphics[width=0.49\columnwidth]{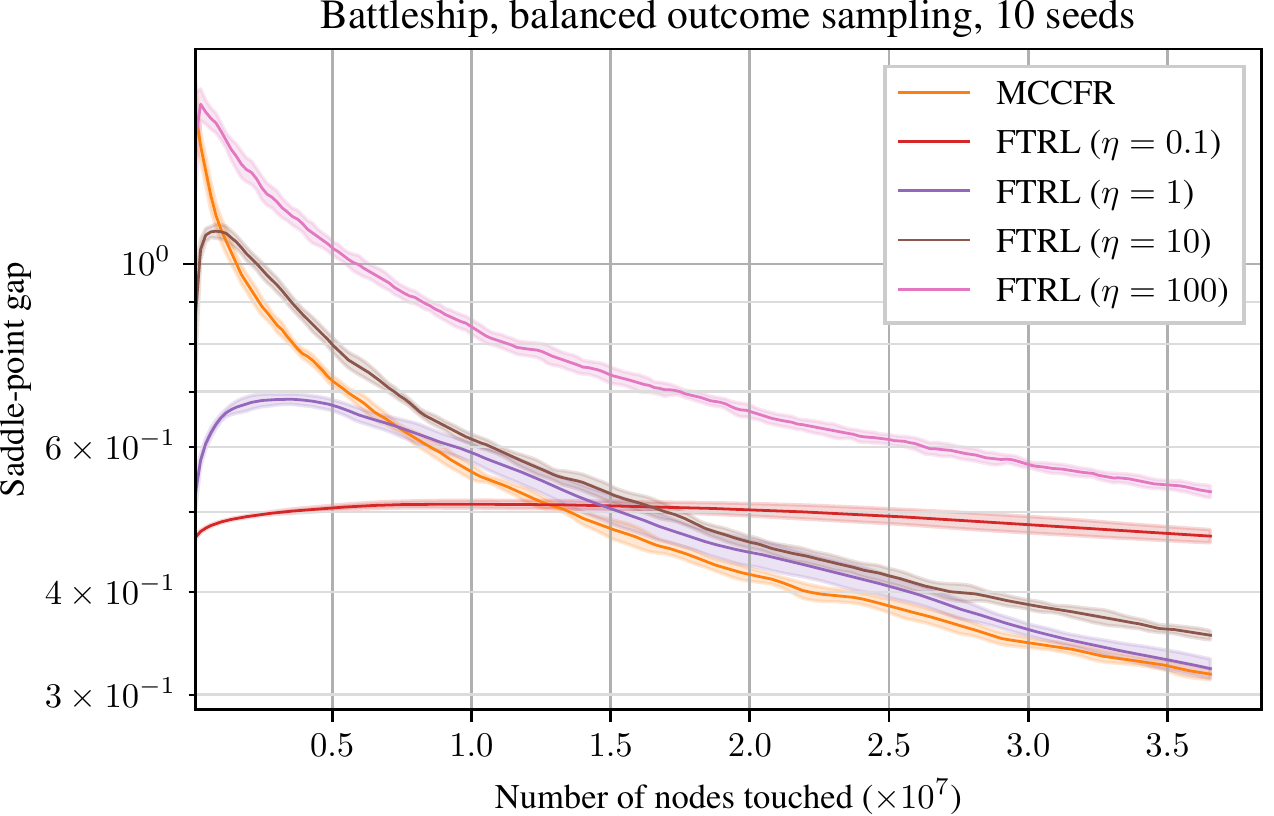}
  \includegraphics[width=0.49\columnwidth]{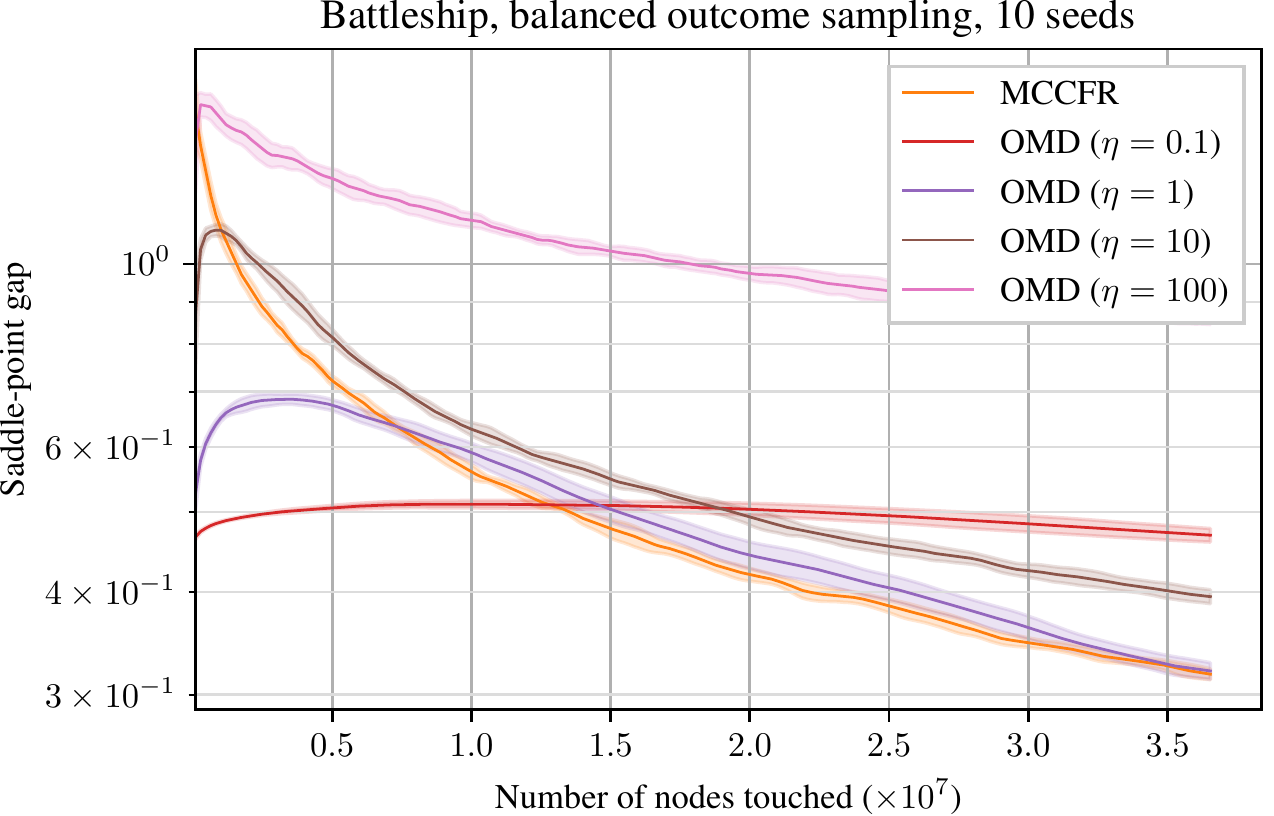}
  \caption{Performance of FTRL and OMD with four stepsizes on Battleship with outcome sampling. MCCFR shown for reference}
  \label{fig:appendix bs outcome}
\end{figure}
\begin{figure}[H]
  \centering
  \includegraphics[width=0.49\columnwidth]{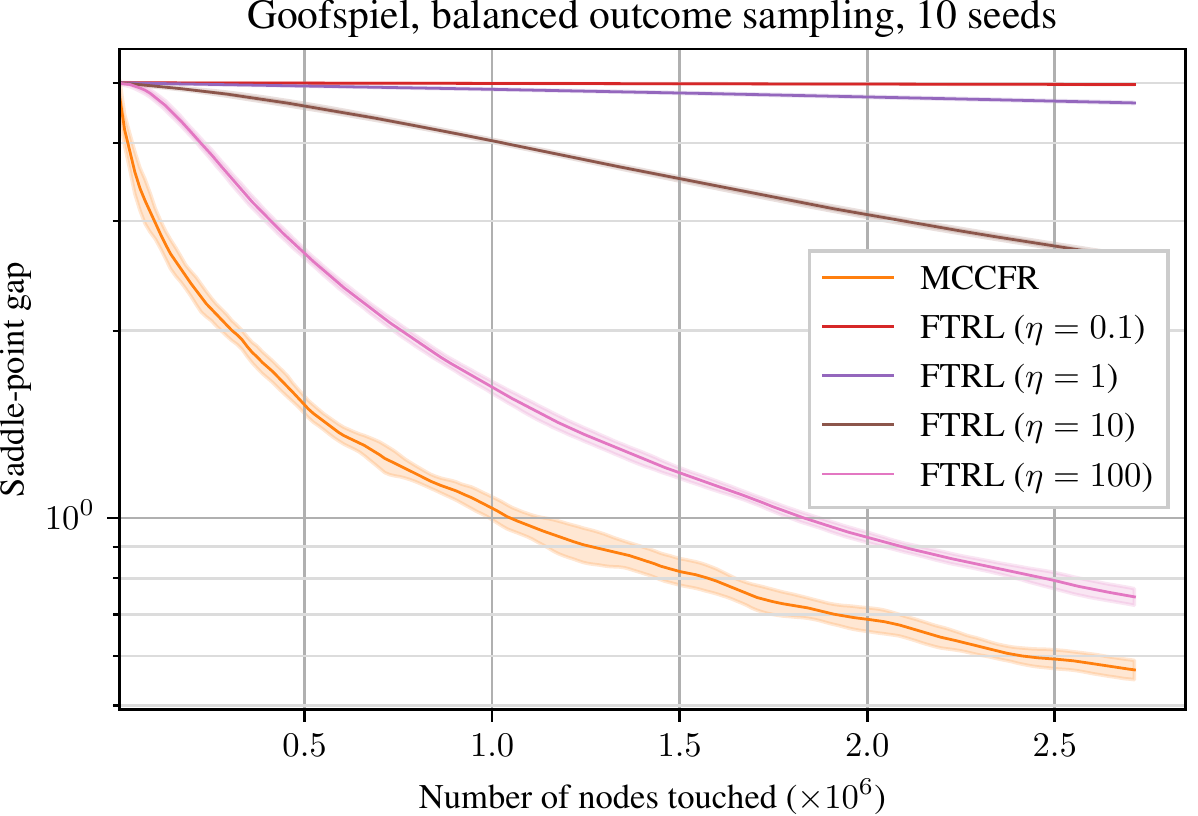}
  \includegraphics[width=0.49\columnwidth]{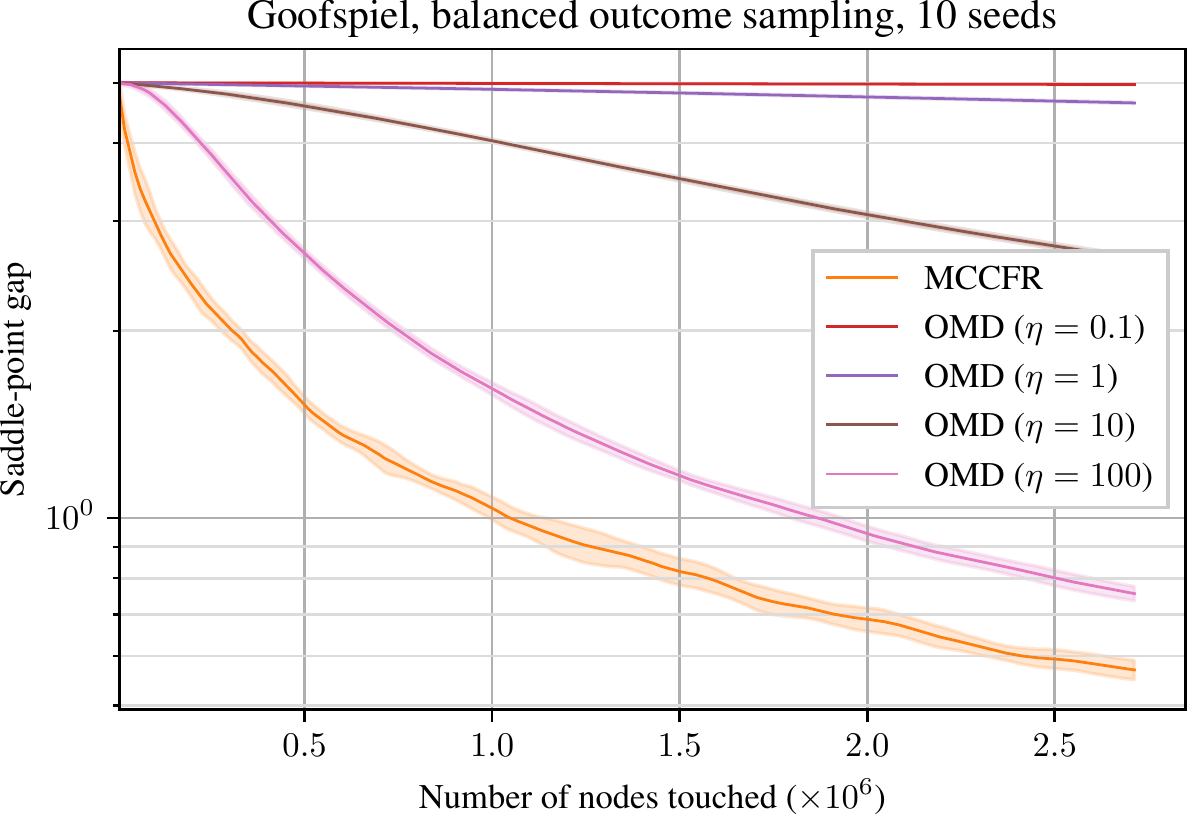}
  \caption{Performance of FTRL and OMD with four stepsizes on Goofspiel with outcome sampling. MCCFR shown for reference}
  \label{fig:appendix goofspiel outcome}
\end{figure}
\begin{figure}[H]
  \centering
  \includegraphics[width=0.49\columnwidth]{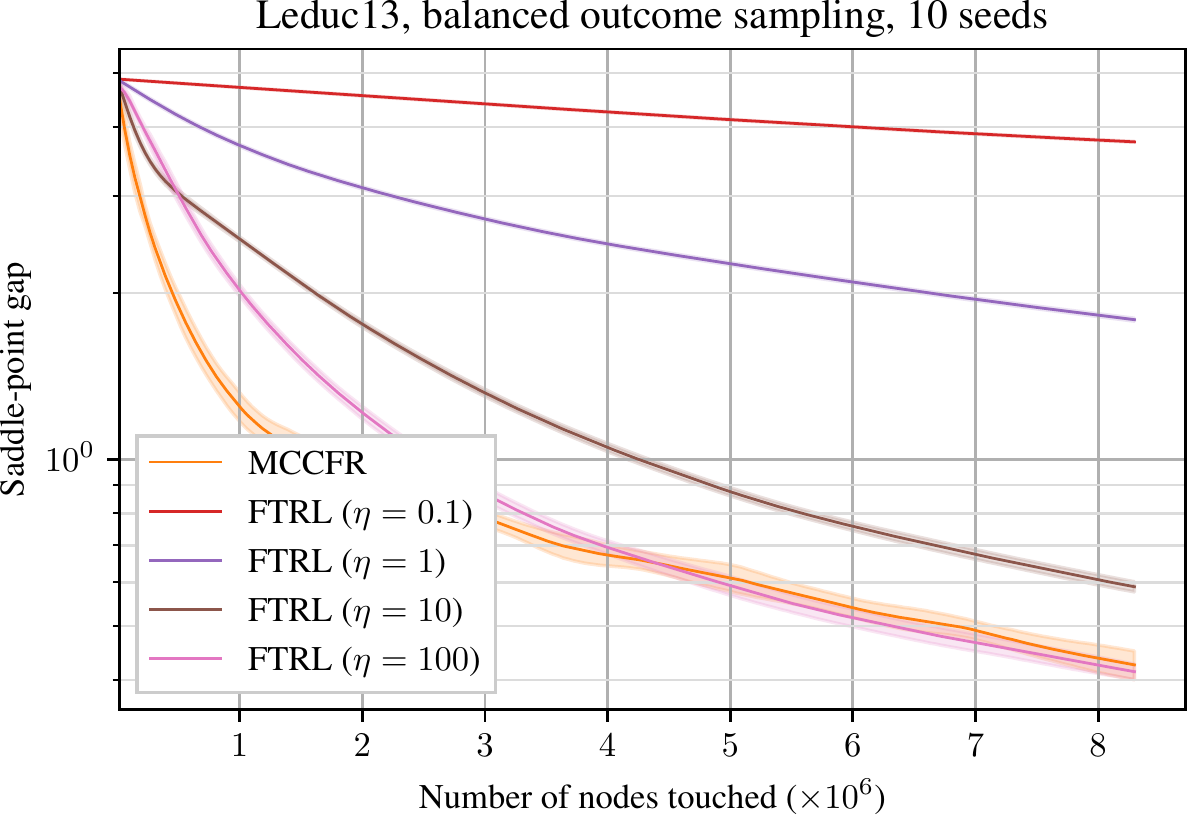}
  \includegraphics[width=0.49\columnwidth]{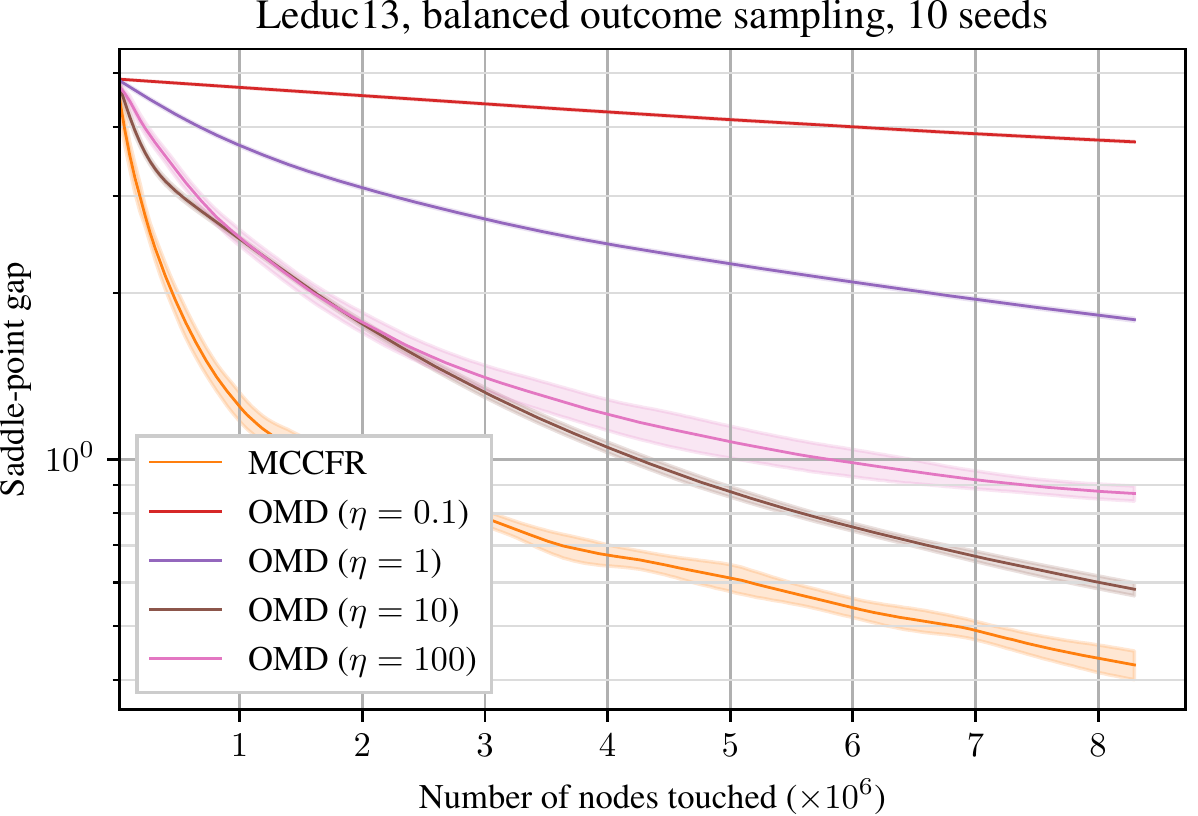}
  \caption{Performance of FTRL and OMD with four stepsizes on Leduc 13 with outcome sampling. MCCFR shown for reference}
  \label{fig:appendix leduc outcome}
\end{figure}
\begin{figure}[H]
  \centering
  \includegraphics[width=0.49\columnwidth]{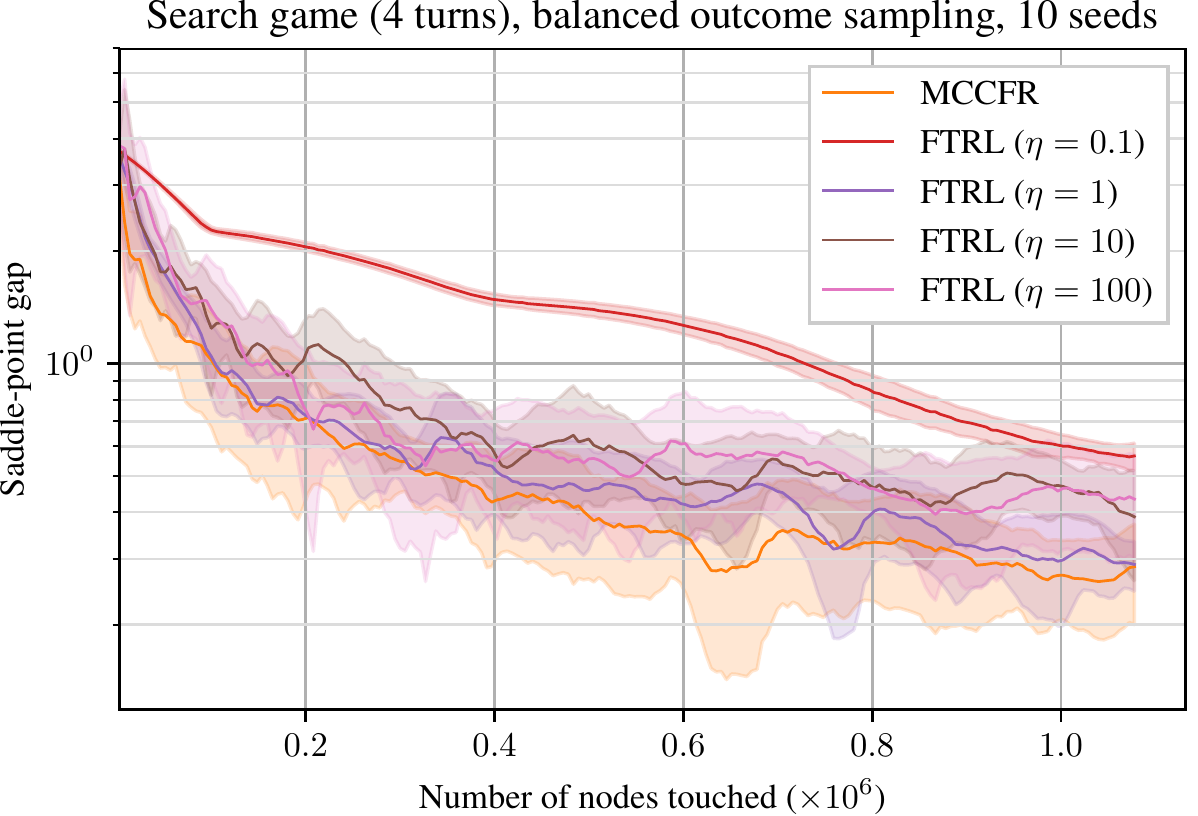}
  \includegraphics[width=0.49\columnwidth]{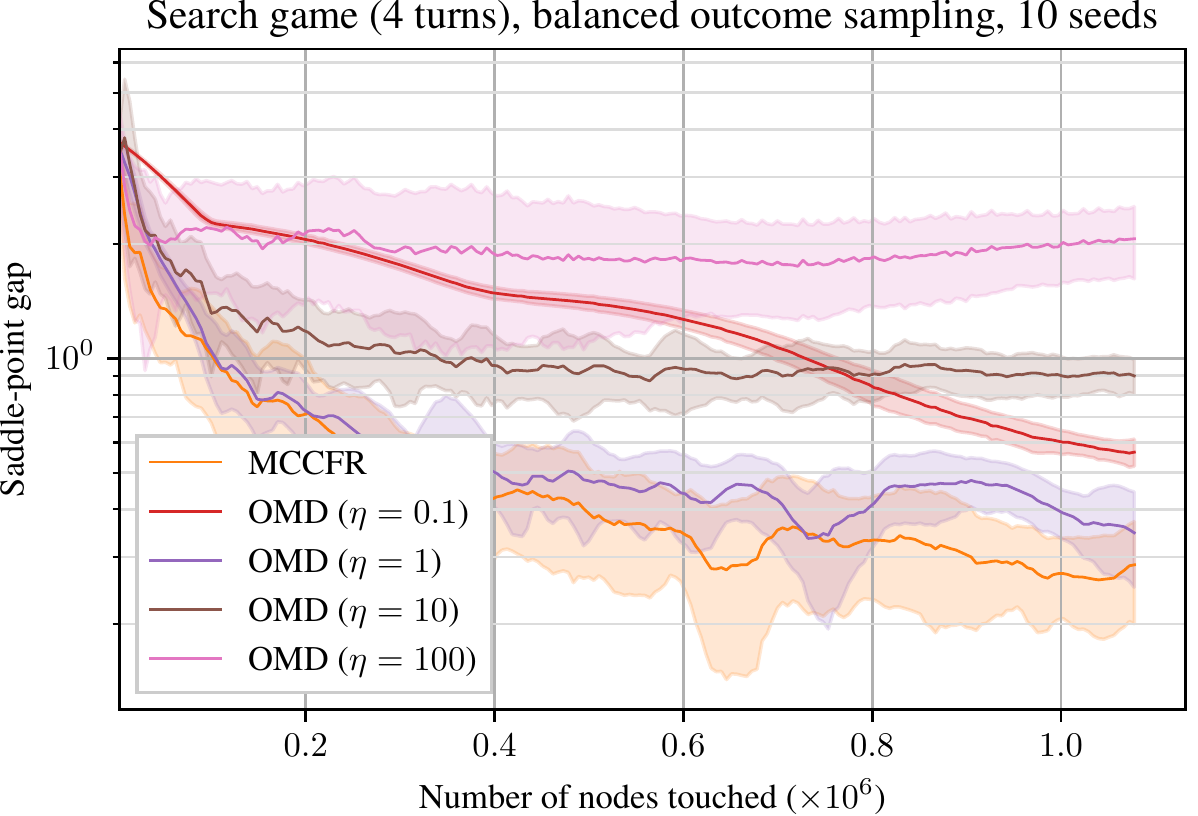}
  \caption{Performance of FTRL and OMD with four stepsizes on Search-4 with outcome sampling. MCCFR shown for reference}
  \label{fig:appendix search4 outcome}
\end{figure}
\begin{figure}[H]
  \centering
  \includegraphics[width=0.49\columnwidth]{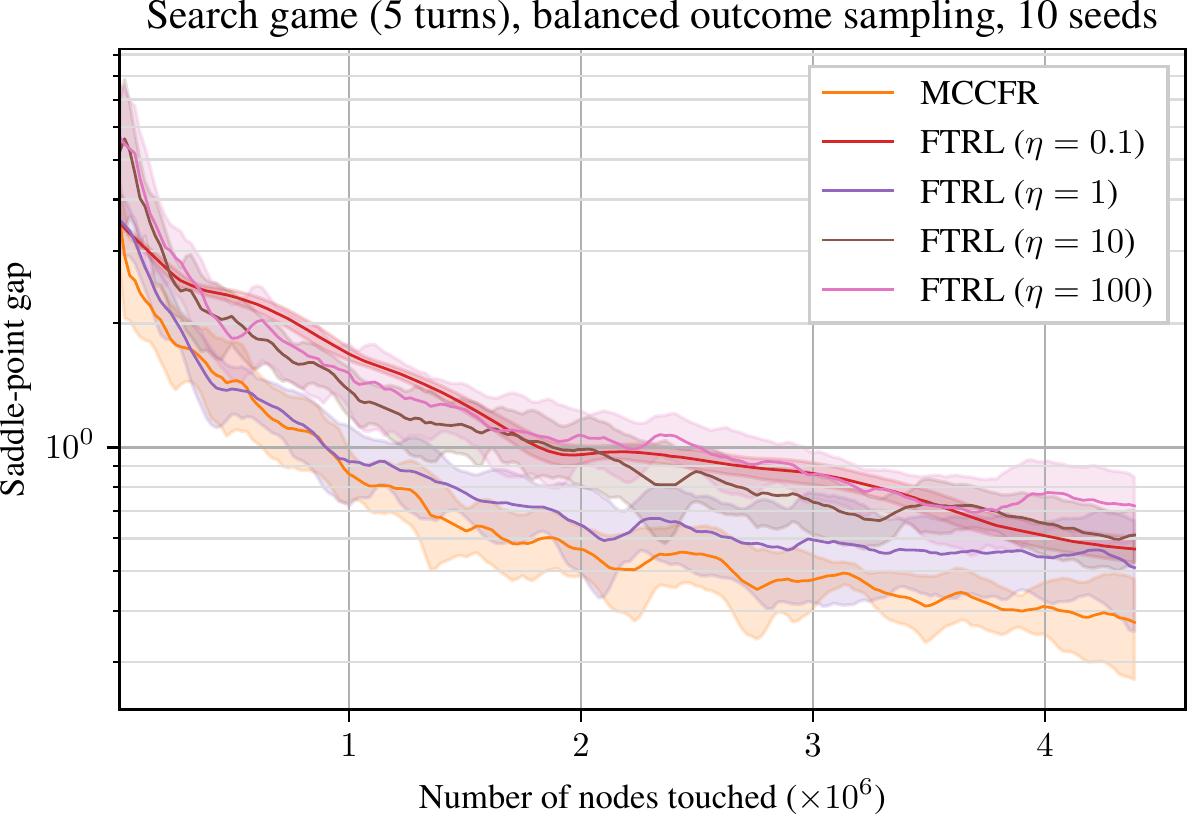}
  \includegraphics[width=0.49\columnwidth]{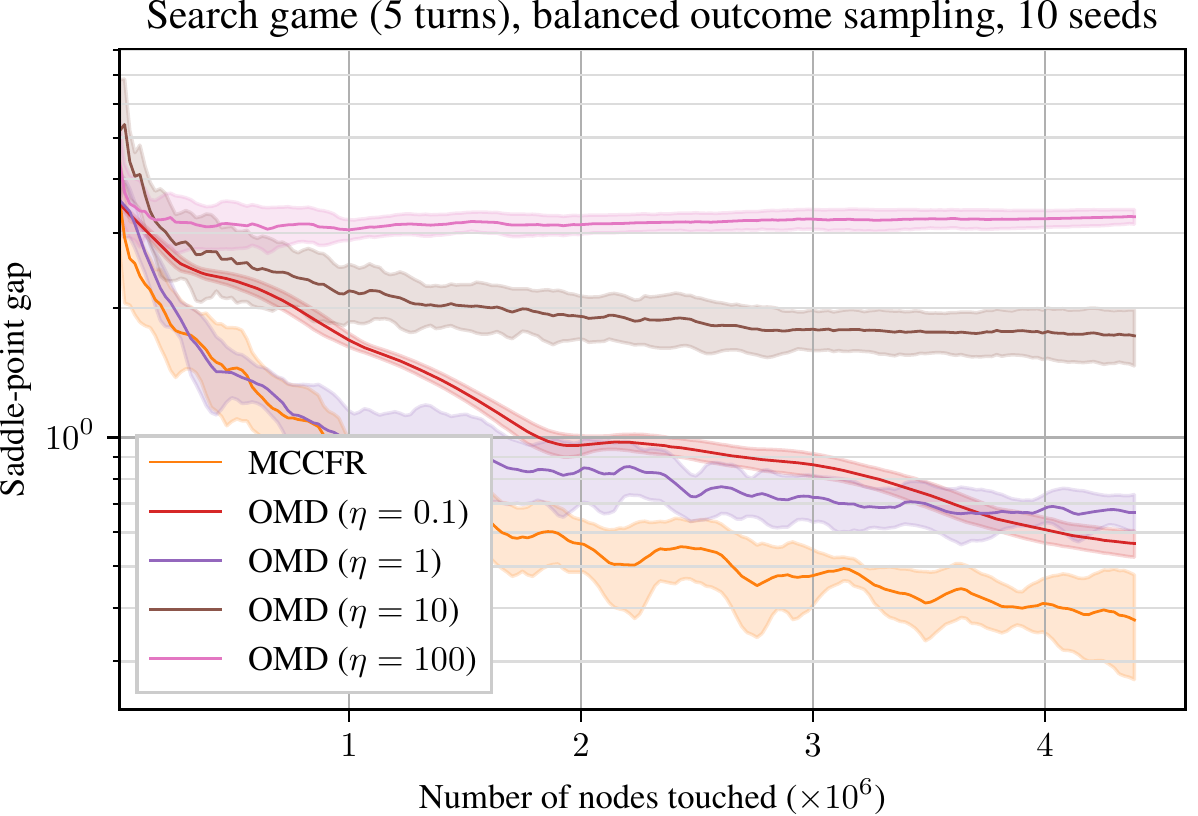}
  \caption{Performance of FTRL and OMD with four stepsizes on Search-5 with outcome sampling. MCCFR shown for reference}
  \label{fig:appendix search5 outcome}
\end{figure}
%%% Local Variables:
%%% mode: latex
%%% TeX-master: "../stochastic-rm"
%%% End:

\fi
\end{document}